\title{Scalable Termination Detection for Distributed Actor Systems} 
\titlerunning{Concurrent Termination Detection} 
\author{Dan Plyukhin}{University of Illinois at Urbana-Champaign, USA}{daniilp2@illinois.edu}{}{}
\author{Gul Agha}{University of Illinois at Urbana-Champaign, USA}{agha@illinois.edu}{}{}
\authorrunning{D. Plyukhin and G. Agha} 
\keywords{actors, concurrency, termination detection, quiescence detection, garbage collection, distributed systems} 
\begin{document}

\maketitle

\begin{abstract}
Automatic {\em garbage collection\/} (GC) prevents certain kinds of
bugs and reduces programming overhead.  GC techniques for sequential
programs are based on {\em reachability analysis\/}.
However, testing reachability from a root set is inadequate for determining whether an {\em actor\/} is garbage because an unreachable actor 
may send a message to a reachable actor.  
Instead, it is sufficient to check \emph{termination} (sometimes also called \emph{quiescence}): an
actor is terminated if it is not currently processing a message 
and cannot receive a message in the future.
Moreover, many actor frameworks provide all
actors with access to file I/O or external storage; without inspecting an actor's internal code, it is necessary to check that the actor has terminated to ensure that it may be garbage collected in these frameworks.
Previous algorithms to detect actor garbage require coordination
mechanisms such as causal message delivery or nonlocal monitoring of
actors for mutation.  Such coordination mechanisms adversely affect
concurrency and are therefore expensive in distributed systems.  We
present a low-overhead {\em reference listing\/} technique (called
{\em DRL\/}) for termination detection in actor systems.  DRL is based
on asynchronous local snapshots and message-passing between actors.
This enables a decentralized implementation and transient network
partition tolerance.  The paper provides a formal description of DRL,
shows that all actors identified as garbage have indeed terminated
(safety), and that all terminated actors--under certain reasonable
assumptions--will eventually be identified (liveness).

\end{abstract}

\section{Introduction}

The actor model~\cite{books/daglib/0066897,journals/cacm/Agha90} is a
foundational model of concurrency that has been widely adopted for its
scalability: for example, actor languages have been used to implement services at
PayPal~\cite{PayPalBlowsBillion}, Discord~\cite{vishnevskiyHowDiscordScaled2017}, and in the United Kingdom's National Health Service
database~\cite{NHSDeployRiak2013}.  In the actor model, stateful processes known as \emph{actors}
execute concurrently and communicate by
sending asynchronous messages to other actors, provided they have a \emph{reference} (also called a \emph{mail address} or \emph{address} in the literature) to the recipient. Actors can also spawn new actors. An actor is said to be \emph{garbage} if it can be destroyed without affecting the system's observable behavior.

Although a number of algorithms for automatic actor GC have been proposed \cite{ clebschFullyConcurrentGarbage2013,
  kafuraConcurrentDistributedGarbage1995,
  vardhanUsingPassiveObject2003,
  venkatasubramanianScalableDistributedGarbage1992,
  wangConservativeSnapshotbasedActor2011,
  wangDistributedGarbageCollection2006}, actor languages and
frameworks currently popular in industry (such as Akka \cite{Akka},
Erlang \cite{armstrongConcurrentProgrammingERLANG1996}, and Orleans
\cite{bykovOrleansCloudComputing2011}) require that programmers
garbage collect actors manually.  We believe this is because the algorithms
proposed thus far are too expensive to implement in distributed
systems.  In order to find applicability in real-world actor runtimes, we argue that a GC algorithm should satisfy the following properties:

\begin{enumerate}
\item (\emph{Low latency}) GC should not restrict concurrency in the
  application.
\item (\emph{High throughput}) GC should not impose significant space
  or message overhead.
\item (\emph{Scalability}) GC should scale with the number of actors and nodes in the system.
\end{enumerate} 

To the best of our knowledge, no previous algorithm satisfies all
three constraints. The first requirement precludes any global
synchronization between actors, a ``stop-the-world'' step, or a
requirement for causal order delivery of all messages.  The second
requirement means that the number of additional ``control'' messages imposed by the
algorithm should be minimal.  The third requirement precludes
algorithms based on global snapshots, since taking a
global snapshot of a system with a large number of nodes is
infeasible.

To address these goals, we have developed a garbage collection
technique called \emph{DRL} for \emph{Deferred Reference Listing}. The primary advantage of DRL is that it is decentralized and incremental: local garbage can be collected at one node without communicating with other nodes. Garbage collection can be performed concurrently with the application and imposes no message ordering constraints. We also expect DRL to be reasonably efficient in practice, since it does not require many additional messages or significant actor-local computation.  

DRL works as follows.  The \emph{communication protocol} (\cref{sec:model}) tracks information, such as references and message counts, and stores it in each actor's state.  Actors periodically send out copies of their local state (called \emph{snapshots}) to be stored at one or more designated \emph{snapshot aggregator} actors. Each aggregator periodically searches its local store to find a subset of snapshots representing terminated actors (\cref{sec:termination-detection}).  Once an actor is determined to have terminated, it can be garbage collected by, for example, sending it a \emph{self-destruct} message.  Note that our termination detection algorithm itself is \textit{location transparent}. 

Since DRL is defined on top of the actor model, it is oblivious to details of a particular implementation (such as how sequential computations are represented). Our technique is therefore applicable to any actor framework and can be
implemented as a library.  Moreover, it can also be applied to open
systems, allowing a garbage-collected actor subsystem to interoperate with an external actor
system.

The outline of the paper is as follows. We provide a characterization of actor garbage in Section~\ref{sec:background} and discuss related work in Section~\ref{sec:related-work}. We then provide a
specification of the DRL protocol in Section~\ref{sec:model}.
In Section~\ref{sec:chain-lemma}, we describe a key property of DRL called the \emph{Chain Lemma}. This lemma allows us to prove the safety and liveness properties, which are stated in
Section~\ref{sec:termination-detection}. We then conclude in Section~\ref{sec:future-work} with some discussion of future work and how DRL may be used in practice. To conserve space, all proofs have been relegated to the Appendix.

\section{Preliminaries}
\label{sec:background}

An actor can only receive a message when it is \emph{idle}. Upon
receiving a message, it becomes \emph{busy}. A busy actor can perform
an unbounded sequence of \emph{actions} before becoming idle. In~\cite{aghaFoundationActorComputation1997}, an action may be to spawn an actor, send a message, or perform a (local) computation. We will also assume that actors can perform effects, such as file I/O. The actions an actor performs in response to a message are dictated by its application-level code, called a \emph{behavior}.

Actors can also receive messages from \emph{external} actors (such as the user) by
becoming \emph{receptionists}. An actor $A$ becomes a receptionist when its address is exposed to an external actor. Subsequently, any external actor can potentially obtain $A$'s address and send it a message. It is not possible for an actor system to determine when all external actors have ``forgotten'' a receptionist's address. We will therefore assume that an actor can never cease to be a receptionist once its address has been exposed.

\begin{figure} \centering \tikzfig{contents/diagrams/actor-graph-v2}
\caption{A simple actor system. The first configuration leads to the second after $C$ receives the message $m$, which contains a reference to $E$. Notice that an actor can send a message and ``forget'' its reference to the recipient before the message is delivered, as is the case for actor $F$. In both configurations, $E$ is a potential acquaintance of $C$, and $D$ is potentially reachable from $C$. The only terminated actor is $F$ because all other actors are potentially reachable from unblocked actors.}
\label{fig:actor-graph-example}
\end{figure}

An actor is said to be garbage if it can be destroyed without affecting the system's observable behavior. However, without analyzing an actor’s code, it is not possible to know whether it will have an effect when it receives a message. We will therefore restrict our attention to actors that can be guaranteed to be garbage without inspecting their behavior. According to this more conservative definition, any actor that might receive a message in the future should not be garbage collected because it could, for instance, write to a log file when it becomes busy. Conversely, any actor that is guaranteed to remain idle indefinitely can safely be garbage collected because it will never have any effects; such an actor is said to be \emph{terminated}.
Hence, garbage actors coincide with terminated actors in our model. 

Terminated actors can be detected by looking at the global state of the system. We say that an actor $B$ is a \emph{potential acquaintance} of $A$ (and $A$ is a \emph{potential inverse acquaintance} of $B$) if $A$ has a reference to $B$ or if there is an undelivered message to $A$ that contains a reference to $B$. We define \emph{potential reachability} to be the reflexive transitive closure of the potential acquaintance relation. If an actor is idle and has no undelivered messages, then it is \emph{blocked}; otherwise it is \emph{unblocked}. We then observe that an actor is terminated when it is only potentially reachable by blocked actors: Such an actor is idle, blocked, and can only potentially be sent a message by other idle blocked actors. Conversely, without analyzing actor code we cannot safely conclude that an actor is terminated if it is potentially reachable by an unblocked actor. Hence, we say that an actor is terminated if and only if it is blocked and all of its potential inverse acquaintances are terminated.

\section{Related Work}\label{sec:related-work}

\paragraph*{Global Termination}  
\emph{Global} termination detection (GTD) is used to determine when
\emph{all} processes have terminated
\cite{matternAlgorithmsDistributedTermination1987,matochaTaxonomyDistributedTermination1998}.
For GTD, it suffices to obtain global message send and receive counts.
Most GTD algorithms also assume a fixed process topology. However, Lai
gives an algorithm in \cite{laiTerminationDetectionDynamically1986}
that supports dynamic topologies such as in the actor model. Lai's
algorithm performs termination detection in ``waves'', disseminating
control messages along a spanning tree (such as an actor supervisor
hierarchy) so as to obtain consistent global message send and receive
counts. Venkatasubramanian et al.~take a similar approach to obtain a
consistent global snapshot of actor states in a distributed
system~\cite{venkatasubramanianScalableDistributedGarbage1992}. However,
such an approach does not scale well because it is not incremental:
garbage cannot be detected until all nodes in the system have
responded.  In contrast, DRL does not require a global snapshot, does not
require actors to coordinate their local snapshots, and does not
require waiting for all nodes before detecting local terminated
actors.

\paragraph*{Reference Tracking} We say that an idle actor is \emph{simple garbage} if it has no undelivered messages and no other actor has a reference to it.
Such actors can be detected with distributed reference counting
\cite{watsonEfficientGarbageCollection1987,bevanDistributedGarbageCollection1987,piquerIndirectReferenceCounting1991}
or with reference listing
\cite{DBLP:conf/iwmm/PlainfosseS95,wangDistributedGarbageCollection2006}
techniques.  In reference listing algorithms, each actor maintains a
partial list of actors that may have references to it. Whenever $A$ sends $B$ a
reference to $C$, it also sends an $\InfoMsg$ message informing $C$
about $B$'s reference. Once $B$ no longer needs a reference to $C$, it
informs $C$ by sending a $\ReleaseMsg$ message; this message should not be processed by $C$ until all preceding messages from $B$ to $C$ have been delivered. Thus an actor is
simple garbage when its reference listing is
empty. 

Our technique uses a form of \emph{deferred reference listing}, in which $A$ may also defer sending $\InfoMsg$
messages to $C$ until it releases its references to $C$.  This allows
$\InfoMsg$ and $\ReleaseMsg$ messages to be batched together, reducing communication
overhead.

\paragraph*{Cyclic Garbage}

Actors that are transitively acquainted with one another are said to
form cycles. Cycles of terminated actors are called \emph{cyclic
garbage} and cannot be detected with reference listing alone.  Since
actors are hosted on nodes and cycles may span across multiple nodes,
detecting cyclic garbage requires sharing information between nodes to
obtain a consistent view of the global topology.  One approach is to
compute a global snapshot of the distributed system
\cite{kafuraConcurrentDistributedGarbage1995} using the Chandy-Lamport
algorithm \cite{chandyDistributedSnapshotsDetermining1985}; but this
requires pausing execution of all actors on a node to compute its
local snapshot.

Another approach is to add edges to the actor reference graph so
that actor garbage coincides with passive object garbage
\cite{vardhanUsingPassiveObject2003,wangActorGarbageCollection2010}. This
is convenient because it allows existing algorithms for distributed
passive object GC, such as
\cite{schelvisIncrementalDistributionTimestamp1989}, to be reused in
actor systems. However, such transformations require that actors know
when they have undelivered messages, which requires some form of
synchronization.

To avoid pausing executions, Wang and Varela proposed a reference
listing based technique called the \emph{pseudo-root} algorithm.  The
algorithm computes \emph{approximate} global snapshots and is
implemented in the SALSA runtime
\cite{wangDistributedGarbageCollection2006,wangConservativeSnapshotbasedActor2011}.
The pseudo-root algorithm requires a high number of additional control
messages and requires actors to write to shared memory if they migrate
or release references during snapshot collection.  Our protocol
requires fewer control messages and no additional actions between
local actor snapshots.  Wang and Varela also explicitly address migration of actors, 
a concern orthogonal to our algorithm.

Our technique is inspired by \emph{MAC}, a termination detection
algorithm implemented in the Pony runtime
\cite{clebschFullyConcurrentGarbage2013}. In MAC, actors send a local
snapshot to a designated cycle detector whenever their message queue
becomes empty, and send another notification whenever it becomes non-empty. Clebsch and Drossopoulou prove that for systems with
causal message delivery, a simple request-reply protocol is sufficient
to confirm that the cycle detector's view of the topology is
consistent.  However, enforcing causal delivery in a distributed
system imposes additional space and networking costs
\cite{fidge1987timestamps,blessingTreeTopologiesCausal2017}. DRL is
similar to MAC, but does not require causal message delivery, supports
decentralized termination detection, and actors need not take
snapshots each time their message queues become empty. The key insight is that these limitations can be removed by tracking additional information at the actor level.

An earlier version of DRL appeared in
\cite{plyukhinConcurrentGarbageCollection2018}. In this paper, we
formalize the description of the algorithm and prove its safety and
liveness.  In the process, we discovered that release acknowledgment
messages are unnecessary and that termination detection is more
flexible than we first thought: it is not necessary for GC to be
performed in distinct ``phases'' where every actor takes a snapshot in
each phase.  In particular, once an idle actor takes a snapshot, it
need not take another snapshot until it receives a fresh message.

\section{A Two-Level Semantic Model}\label{sec:model}

Our computation model is based on the two
level approach to actor semantics
\cite{venkatasubramanianReasoningMetaLevel1995}, in which a lower \emph{system-level} transition system interprets the operations performed by a higher, user-facing
\emph{application-level} transition system. In this section, we define the DRL communication protocol at the system level. We do not provide a
transition system for the application level computation model, since it is
not relevant to garbage collection (see
\cite{aghaFoundationActorComputation1997} for how it can be
done).  What is relevant to us is that corresponding to each
application-level action is a system-level transition that tracks
references.
We will therefore define \emph{system-level configurations} and
\emph{transitions on system-level configurations}.  We will refer to
these, respectively, as configurations and transitions in the rest of
the paper.



\subsection{Overview}
\label{sec:overview}

Actors in DRL use \emph{reference objects} (abbreviated \emph{refobs}) to send messages, instead of using plain actor addresses.  Refobs are similar to unidirectional channels and can only be used by their designated \emph{owner} to send messages to their \emph{target}; thus in order for $A$ to give $B$ a reference to $C$, it must explicitly create a new refob owned by $B$. Once a refob is no longer needed, it should be \emph{deactivated} by its owner and removed from local state.

The DRL communication protocol enriches each actor's state with a list of refobs that it currently owns and associated message counts representing the number of messages sent using each refob. Each actor also maintains a subset of the refobs of which it is the target, together with associated message receive counts. Lastly, actors perform a form of ``contact tracing'' by maintaining a subset of the refobs that they have created for other actors; we provide details about the bookkeeping later in this section.

The additional information above allows us to detect termination by inspecting actor snapshots. If a set of snapshots is consistent (in the sense of \cite{chandyDistributedSnapshotsDetermining1985}) then we can use the ``contact tracing'' information to determine whether the set is \emph{closed} under the potential inverse acquaintance relation (see \cref{sec:chain-lemma}). Then, given a consistent and closed set of snapshots, we can use the message counts to determine whether an actor is blocked. We can therefore find all the terminated actors within a consistent set of snapshots.

In fact, DRL satisfies a stronger property: any set of snapshots that ``appears terminated'' in the sense above is guaranteed to be consistent. Hence, given an arbitrary  closed set of snapshots, it is possible to determine which of the corresponding actors have terminated. This allows a great deal of freedom in how snapshots are aggregated. For instance, actors could place their snapshots in a global eventually consistent store, with a garbage collection thread at each node periodically inspecting the store for local terminated actors.

\paragraph*{Reference Objects}

\begin{figure} \centering \tikzfig{contents/diagrams/references}
    \caption{An example showing how refobs are created and
destroyed. Below each actor we list all the ``facts'' related to $z$ that are stored in its local state. Although not pictured in the figure, $A$ also obtains facts $\Activated(x)$ and $\Activated(y)$ after spawning actors $B$ and $C$, respectively. Likewise, actors $B,C$ obtain facts $\Created(x),\Created(y)$, respectively, upon being spawned.}
    \label{fig:refob-example}
\end{figure}

A refob is a triple $(x,A,B)$, where $A$ is the owner actor's address, $B$ is the target actor's address, and $x$ is a globally unique token. An actor can cheaply generate such a token by combining its address with a local sequence number, since actor systems already guarantee that each address is unique. We will stylize a triple $(x,A,B)$ as $\Refob x A B$. We will also sometimes refer to such a refob as simply $x$, since tokens act as unique identifiers.

When an actor $A$ spawns an actor $B$ (Fig.~\ref{fig:refob-example}
(1, 2)) the DRL protocol creates a new refob
$\Refob x A B$ that is stored in both $A$ and $B$'s system-level
state, and a refob $\Refob y B B$ in $B$'s state. The refob $x$ allows $A$ to send application-level messages to
$B$. These messages are denoted $\AppMsg(x,R)$, where $R$ is the sett of refobs contained in the message that $A$ has created for $B$. The refob $y$ corresponds to the \texttt{self} variable present in some actor languages. 

If $A$ has active refobs $\Refob x A B$ and $\Refob y A C$, then it can
create a new refob $\Refob z B C$ by generating a token $z$. In
addition to being sent to $B$, this refob must also temporarily be
stored in $A$'s system-level state and marked as ``created using $y$''
(Fig.~\ref{fig:refob-example} (3)). Once $B$ receives $z$, it must add
the refob to its system-level state and mark it as ``active''
(Fig.~\ref{fig:refob-example} (4)). Note that $B$ can have multiple
distinct refobs that reference the same actor in its state; this
can be the result of, for example, several actors concurrently sending
refobs to $B$.  Transition rules for spawning actors and sending
messages are given in Section~\ref{sec:standard-actor-operations}.

Actor $A$ may remove $z$ from its state once it has sent a
(system-level) $\InfoMsg$ message informing $C$ about $z$
(Fig.~\ref{fig:refob-example} (4)). Similarly, when $B$ no longer
needs its refob for $C$, it can ``deactivate'' $z$ by removing it
from local state and sending $C$ a (system-level) $\ReleaseMsg$
message (Fig.~\ref{fig:refob-example} (5)). Note that if $B$ already
has a refob $\Refob z B C$ and then receives another $\Refob {z'} B C$,
then it can be more efficient to defer deactivating the extraneous
$z'$ until $z$ is also no longer needed; this way, the $\ReleaseMsg$
messages can be batched together.

When $C$ receives an $\InfoMsg$ message, it records that the refob
has been created, and when $C$ receives a $\ReleaseMsg$ message, it
records that the refob has been released
(Fig.~\ref{fig:refob-example} (6)).  Note that these messages may
arrive in any order. Once $C$ has received both, it is permitted to
remove all facts about the refob from its local state. Transition
rules for these reference listing actions are given in
Section~\ref{sec:release-protocol}.

Once a refob has been created, it cycles through four states:
pending, active, inactive, or released.  A refob $\Refob z B C$ is
said to be \emph{pending} until it is received by its owner $B$. Once
received, the refob is \emph{active} until it is \emph{deactivated}
by its owner, at which point it becomes \emph{inactive}.  Finally,
once $C$ learns that $z$ has been deactivated, the refob is said to
be \emph{released}.  A refob that has not yet been released is
\emph{unreleased}.  

Slightly amending the definition we gave in \cref{sec:background}, we say that $B$ is a \emph{potential acquaintance} of $A$
(and $A$ is a \emph{potential inverse acquaintance} of $B$) when there
exists an unreleased refob $\Refob x A B$. Thus, $B$ becomes a potential acquaintance of $A$ as soon as $x$ is created, and only ceases to be an acquaintance once it has received a $\ReleaseMsg$ message for every refob $\Refob y A B$ that has been created so far.

\begin{figure} \centering \tikzfig{contents/diagrams/message-counts-timelines-simpler}
    \caption{A time diagram for actors $A,B,C$, demonstrating message counts and consistent snapshots. Dashed arrows represent messages and dotted lines represent consistent cuts. In each cut above, $B$'s message send count agrees with $C$'s message receive count.}
    \label{fig:message-counts}
\end{figure}

\paragraph*{Message Counts and Snapshots}

For each refob $\Refob x A B$, the owner $A$ counts the
number of $\AppMsg$ and $\InfoMsg$ messages sent along $x$; this count can be deleted when $A$
deactivates $x$. Each message is annotated with the refob used to
send it. Whenever $B$ receives an $\AppMsg$ or $\InfoMsg$ message along $x$, it
correspondingly increments a receive count for $x$; this count can be deleted once $x$
has been released. Thus the memory overhead of message counts is linear in
the number of unreleased refobs.

A snapshot is a copy of all the facts in an actor's system-level state at some point in time. We will assume throughout the paper that in every set of snapshots $Q$, each snapshot was taken by a different actor. Such a set is also said to form a \emph{cut}. Recall that a cut is consistent if no snapshot in the cut causally precedes any other \cite{chandyDistributedSnapshotsDetermining1985}. Let us also say that $Q$ is a set of \emph{mutually quiescent} snapshots if there are no undelivered messages between actors in the cut. That is, if $A \in Q$ sent a message to $B \in Q$ before taking a snapshot, then the message must have been delivered before $B$ took its snapshot. Notice that if all snapshots in $Q$ are mutually quiescent, then $Q$ is consistent.

Notice also that in Fig.~\ref{fig:message-counts}, the snapshots of $B$ and $C$ are mutually quiescent when their send and receive counts agree. This is ensured in part because each refob has a unique token: If actors associated message counts with actor names instead of tokens, then $B$’s snapshots at $t_0$ and $t_3$ would both contain $\SentCount(C,1)$. Thus, $B$’s snapshot at $t_3$ and $C$’s snapshot at $t_0$ would appear mutually quiescent, despite having undelivered messages in the cut.

We would like to conclude that snapshots from two actors $A,B$ are mutually quiescent if and only if their send and receive counts are agreed for every refob $\Refob x A B$ or $\Refob y B A$. Unfortunately, this fails to hold in general for systems with unordered message delivery.  It also fails to hold when, for instance, the owner actor takes a snapshot before the refob is activated and the target actor takes a snapshot after the refob is released. In such a case, neither knowledge set includes a message count for the refob and they therefore appear to agree.  However, we show that the message counts can nevertheless be used to bound the number of undelivered messages for purposes of our algorithm (\cref{lem:msg-counts}).

\paragraph*{Definitions}

We use the capital letters $A,B,C,D,E$ to denote actor addresses.
Tokens are denoted $x,y,z$, with a special reserved token $\NullToken$
for messages from external actors.

A \emph{fact} is a value that takes one of the following forms:
$\Created(x)$, $\Released(x)$, $\CreatedUsing(x,y)$, $\Activated(x)$, $\Unreleased(x)$,
$\SentCount(x,n)$, or $\RecvCount(x,n)$ for some refobs $x,y$ and
natural number $n$.  Each actor's state holds a set of facts about
refobs and message counts called its \emph{knowledge set}.  We use
$\phi,\psi$ to denote facts and $\Phi,\Psi$ to denote finite sets of
facts.  Each fact may be interpreted as a \emph{predicate} that
indicates the occurrence of some past event. Interpreting a set of
facts $\Phi$ as a set of axioms, we write $\Phi \vdash \phi$ when
$\phi$ is derivable by first-order logic from $\Phi$ with the
following additional rules: 
\begin{itemize}
    \item If $(\not\exists n \in \mathbb N,\ \SentCount(x,n) \in
\Phi)$ then $\Phi \vdash \SentCount(x,0)$
    \item If $(\not\exists n \in \mathbb N,\ \RecvCount(x,n) \in
\Phi)$ then $\Phi \vdash \RecvCount(x,0)$
    \item If $\Phi \vdash \Created(x) \land \lnot \Released(x)$ then
$\Phi \vdash \Unreleased(x)$
    \item If $\Phi \vdash \CreatedUsing(x,y)$ then $\Phi \vdash
\Created(y)$
\end{itemize}

For convenience, we define a pair of functions
$\IncSent(x,\Phi),\IncRecv(x,\Phi)$ for incrementing message
send/receive counts, as follows: If $\SentCount(x,n) \in \Phi$ for some
$n$, then
$\IncSent(x,\Phi) = (\Phi \setminus \{\SentCount(x,n)\}) \cup
\{\SentCount(x,n+1)\}$; otherwise,
$\IncSent(x,\Phi) = \Phi \cup \{\SentCount(x,1)\}$. Likewise for
$\IncRecv$ and $\RecvCount$.

Recall that an actor is either \emph{busy} (processing a message) or
\emph{idle} (waiting for a message). An actor with knowledge set
$\Phi$ is denoted $[\Phi]$ if it is busy and $(\Phi)$ if it is idle.

Our specification includes both \emph{system messages} (also called
\emph{control messages}) and \emph{application messages}. The former
are automatically generated by the DRL  protocol and handled at the system
level, whereas the latter are explicitly created and consumed by
user-defined behaviors. Application-level messages are denoted
$\AppMsg(x,R)$. The argument $x$ is the refob used to send the
message. The second argument $R$ is a set of refobs created by the
sender to be used by the destination actor. Any remaining application-specific data in the message is omitted in our notation.

The DRL communication protocol uses two kinds of system messages. $\InfoMsg(y, z, B)$ is a message sent from an actor $A$ to an actor $C$, informing it that a new refob $\Refob z B C$ was created using $\Refob y A C$. $\ReleaseMsg(x,n)$ is a message sent from an actor $A$ to an actor $B$, informing it that the refob $\Refob x A B$ has been deactivated and should be released.

A \emph{configuration} $\Config{\alpha}{\mu}{\rho}{\chi}$ is a
quadruple $(\alpha,\mu,\rho,\chi)$ where: $\alpha$ is a mapping from actor addresses to knowledge sets; $\mu$ is a mapping from actor addresses to multisets of messages; and $\rho,\chi$ are sets of actor addresses. Actors in $\dom(\alpha)$ are \emph{internal actors} and actors in $\chi$ are
\emph{external actors}; the two sets may not intersect. The mapping $\mu$ associates each actor with undelivered messages to that actor. Actors in
$\rho$ are \emph{receptionists}.  We will ensure $\rho \subseteq \dom(\alpha)$ remains
valid in any configuration that is derived from a configuration where
the property holds (referred to as the locality laws in
\cite{Baker-Hewitt-laws77}).

Configurations are denoted by $\kappa$, $\kappa'$, $\kappa_0$,
etc. If an actor address $A$ (resp. a token $x$), does not occur in
$\kappa$, then the address (resp. the token) is said to be
\emph{fresh}.  We assume a facility for generating fresh addresses and
tokens.

In order to express our transition rules in a pattern-matching style, we will employ the following shorthand. Let $\alpha,[\Phi]_A$ refer to a
mapping $\alpha'$ where $\alpha'(A) = [\Phi]$ and $\alpha =
\alpha'|_{\dom(\alpha') \setminus \{A\}}$.  Similarly, let
$\mu,\Msg{A}{m}$ refer to a mapping $\mu'$ where $m \in \mu'(A)$ and
$\mu = \mu'|_{\dom(\mu') \setminus \{A\}} \cup \{A \mapsto \mu'(A)
\setminus \{m\}\}$. Informally, the expression $\alpha,[\Phi]_A$ refers to a set of actors containing both $\alpha$ and the busy actor $A$ (with knowledge set $\Phi$); the expression $\mu, \Msg{A}{m}$ refers to the set of messages containing both $\mu$ and the message $m$ (sent to actor $A$).

The rules of our transition system define atomic transitions from one configuration
to another.  Each transition rule has a label $l$, parameterized by some
variables $\vec x$ that occur in the left- and right-hand
configurations. Given a configuration $\kappa$, these parameters
functionally determine the next configuration $\kappa'$. Given
arguments $\vec v$, we write $\kappa \Step{l(\vec v)} \kappa'$ to denote a semantic step from $\kappa$ to $\kappa'$ using rule $l(\vec v)$.

We refer to a label with arguments $l(\vec v)$ as an \emph{event},
denoted $e$. A sequence of events is denoted $\pi$. If $\pi =
e_1,\dots,e_n$ then we write $\kappa \Step \pi \kappa'$ when $\kappa
\Step{e_1} \kappa_1 \Step{e_2} \dots \Step{e_n} \kappa'$. If there
exists $\pi$ such that $\kappa \Step \pi \kappa'$, then $\kappa'$ is
\emph{derivable} from $\kappa$. An \emph{execution} is a sequence of events $e_1,\dots,e_n$ such that
$\kappa_0 \Step{e_1} \kappa_1 \Step{e_2} \dots \Step{e_n} \kappa_n$,
where $\kappa_0$ is the initial configuration
(Section~\ref{sec:initial-configuration}).  We say that a property holds \emph{at time $t$} if it holds in $\kappa_t$.

\subsection{Initial Configuration}\label{sec:initial-configuration}

The initial configuration $\kappa_0$ consists of a single actor in a
busy state:
$$\Config{[\Phi]_A}{\emptyset}{\emptyset}{\{E\}},$$
where
$\Phi = \{\Activated(\Refob x A E),\ \Created(\Refob y A A),\
\Activated(\Refob y A A)\}$. The actor's knowledge set includes a
refob to itself and a refob to an external actor $E$. $A$ can
become a receptionist by sending $E$ a refob to itself.
Henceforth, we will only consider configurations that are derivable
from an initial configuration.

\subsection{Standard Actor Operations}\label{sec:standard-actor-operations}

\begin{figure}[t]
$\textsc{Spawn}(x, A, B)$ 
$$\Config{\alpha, [\Phi]_A}{\mu}{\rho}{\chi} \InternalStep \Config{\alpha, [\Phi \cup \{ \Activated(\Refob x A B) \}]_A, [\Psi]_B}{\mu}{\rho}{\chi}$$
\begin{tabular}{ll}
where & $x,y,B$  fresh\\
and & $\Psi = \{ \Created(\Refob x A B),\ \Created(\Refob {y} B B),\ \Activated(\Refob y B B) \}$
\end{tabular}

\vspace{0.5cm}

$\textsc{Send}(x,\vec y, \vec z, A, B,\vec C)$ 
$$\Config{\alpha, [\Phi]_A}{\mu}{\rho}{\chi} \InternalStep \Config{\alpha, [\IncSent(x,\Phi) \cup \Psi]_A}{\mu, \Msg{B}{\AppMsg(x,R)}}{\rho}{\chi}$$
\begin{tabular}{ll}
where & $\vec y$ and $\vec z$ fresh and $n = |\vec y| = |\vec z| = |\vec C|$\\
and & $\Phi \vdash \Activated(\Refob x A B)$ and $\forall i \le n,\ \Phi \vdash \Activated(\Refob{y_i}{A}{C_i})$\\
and & $R = \{\Refob{z_i}{B}{C_i}\ |\ i \le n \}$ and $\Psi = \{\CreatedUsing(y_i,z_i)\ |\ i \le n \}$
\end{tabular}

\vspace{0.5cm}

$\textsc{Receive}(x,B,R)$ 
$$\Config{\alpha, (\Phi)_B}{\mu, \Msg{B}{\AppMsg(x,R)}}{\rho}{\chi} \InternalStep \Config{\alpha, [\IncRecv(x,\Phi) \cup \Psi]_B}{\mu}{\rho}{\chi}$$
\begin{tabular}{ll}
where $\Psi = \{\Activated(z)\ |\ z \in R\}$
\end{tabular}

\vspace{0.5cm}

$\textsc{Idle}(A)$ 
$$\Config{\alpha, [\Phi]_A}{\mu}{\rho}{\chi} \InternalStep \Config{\alpha, (\Phi)_A}{\mu}{\rho}{\chi}$$

  \caption{Rules for standard actor interactions.}
  \label{rules:actors}
\end{figure}

Fig.~\ref{rules:actors} gives transition rules for standard actor operations, such as spawning actors and sending messages. Each of these rules corresponds a rule in the standard operational semantics of actors~\cite{aghaFoundationActorComputation1997}. Note that each rule is atomic, but can just as well be implemented as a sequence of several smaller steps without loss of generality because actors do not share state -- see \cite{aghaFoundationActorComputation1997} for a formal proof.

The \textsc{Spawn} event allows a busy actor $A$ to spawn a new actor $B$ and creates two refobs $\Refob x A B,\ \Refob y B B$. $B$ is initialized with knowledge about $x$ and $y$ via the facts $\Created(x),\Created(y)$. The facts $\Activated(x), \Activated(y)$ allow $A$ and $B$ to immediately begin sending messages to $B$. Note that implementing \textsc{Spawn} does not require a synchronization protocol between $A$ and $B$ to construct $\Refob x A B$. The parent $A$ can pass both its address and the freshly generated token $x$ to the constructor for $B$. Since actors typically know their own addresses, this allows $B$ to construct the triple $(x,A,B)$. Since the \texttt{spawn} call typically returns the address of the spawned actor, $A$ can also create the same triple.

The \textsc{Send} event allows a busy actor $A$ to send an application-level message to $B$ containing a set of refobs $z_1,\dots,z_n$ to actors $\vec C = C_1,\dots,C_n$ -- it is possible that $B = A$ or $C_i = A$ for some $i$. For each new refob $z_i$, we say that the message \emph{contains $z_i$}. Any other data in the message besides these refobs is irrelevant to termination detection and therefore omitted. To send the message, $A$ must have active refobs to both the target actor $B$ and to every actor $C_1,\dots,C_n$ referenced in the message. For each target $C_i$, $A$ adds a fact $\CreatedUsing(y_i,z_i)$ to its knowledge set; we say that $A$ \emph{created $z_i$ using $y_i$}. Finally, $A$ must increment its $\SentCount$ count for the refob $x$ used to send the message; we say that the message is sent \emph{along $x$}.

The \textsc{Receive} event allows an idle actor $B$ to become busy by consuming an application message sent to $B$. Before  performing subsequent actions, $B$ increments the receive count for $x$ and adds all refobs in the message to its knowledge set.

Finally, the \textsc{Idle} event puts a busy actor into the idle state, enabling it to consume another message.

\subsection{Release Protocol}\label{sec:release-protocol}

\begin{figure}[t!]

$\textsc{SendInfo}(y,z,A,B,C)$ 
$$\Config{\alpha, [\Phi \cup \Psi]_A}{\mu}{\rho}{\chi} \InternalStep \Config{\alpha, [\IncSent(y,\Phi)]_A}{\mu,\Msg{C}{\InfoMsg(y,z,B)}}{\rho}{\chi}$$
\begin{tabular}{ll}
where $\Psi = \{\CreatedUsing(\Refob y A C,\Refob z B C)\}$
\end{tabular}

\vspace{0.5cm}

$\textsc{Info}(y,z,B,C)$ 
$$\Config{\alpha, (\Phi)_C}{\mu, \Msg{C}{\InfoMsg(y,z,B)}}{\rho}{\chi} \InternalStep \Config{\alpha, (\IncRecv(y,\Phi) \cup \Psi)_C}{\mu}{\rho}{\chi}$$
\begin{tabular}{ll}
where $\Psi = \{\Created(\Refob z B C)\}$
\end{tabular}

\vspace{0.5cm}

$\textsc{SendRelease}(x,A,B)$ 
$$\Config{\alpha, [\Phi \cup \Psi]_A}{\mu}{\rho}{\chi} \InternalStep \Config{\alpha, [\Phi]_A}{\mu, \Msg{B}{\ReleaseMsg(x,n)}}{\rho}{\chi}$$
\begin{tabular}{ll}
where &$\Psi = \{\Activated(\Refob x A B), \SentCount(x,n)\}$\\
and & $\not\exists y,\ \CreatedUsing(x,y) \in \Phi$
\end{tabular}

\vspace{0.5cm}

$\textsc{Release}(x,A,B)$
$$\Config{\alpha, (\Phi)_B}{\mu, \Msg{B}{\ReleaseMsg(x,n)}}{\rho}{\chi} \InternalStep \Config{\alpha, (\Phi \cup \{\Released(x)\})_B}{\mu}{\rho}{\chi}$$
\begin{tabular}{l}
only if $\Phi \vdash \RecvCount(x,n)$
\end{tabular}

\vspace{0.5cm}

$\textsc{Compaction}(x,B,C)$ 
$$\Config{\alpha, (\Phi \cup \Psi)_C}{\mu}{\rho}{\chi} \InternalStep \Config{\alpha, (\Phi)_C}{\mu}{\rho}{\chi}$$
\begin{tabular}{ll}
where & $\Psi = \{\Created(\Refob x B C), \Released(\Refob x B C), \RecvCount(x,n)\}$ for some $n \in \mathbb N$\\
or & $\Psi = \{\Created(\Refob x B C), \Released(\Refob x B C)\}$ and $\forall n \in \mathbb N,\ \RecvCount(x,n) \not\in \Phi$
\end{tabular}

\vspace{0.5cm}

$\textsc{Snapshot}(A, \Phi)$ 
$$\Config{\alpha, (\Phi)_A}{\mu}{\rho}{\chi} \InternalStep \Config{\alpha, (\Phi)_A}{\mu}{\rho}{\chi}$$

  \caption{Rules for performing the release protocol.}
  \label{rules:release}
\end{figure}

Whenever an actor creates or receives a refob, it adds facts to its knowledge set. To remove these facts when they are no longer needed, actors can perform the \emph{release protocol} defined in Fig.~\ref{rules:release}. All of these rules are not present in the standard operational semantics of actors.

The \textsc{SendInfo} event allows a busy actor $A$ to inform $C$ about a refob $\Refob z B C$ that it created using $y$; we say that the $\InfoMsg$ message is sent \emph{along $y$} and \emph{contains $z$}. This event allows $A$ to remove the fact $\CreatedUsing(y,z)$ from its knowledge set. It is crucial that $A$ also increments its $\SentCount$ count for $y$ to indicate an undelivered $\InfoMsg$ message sent to $C$: it allows the snapshot aggregator to detect when there are undelivered $\InfoMsg$ messages, which contain refobs. This message is delivered with the \textsc{Info} event, which adds the fact $\Created(\Refob z B C)$ to $C$'s knowledge set and correspondingly increments $C$'s $\RecvCount$ count for $y$.

When an actor $A$ no longer needs $\Refob x A B$ for sending messages, $A$ can deactivate $x$ with the \textsc{SendRelease} event; we say that the $\ReleaseMsg$ is sent \emph{along $x$}. A precondition of this event is that $A$ has already sent messages to inform $B$ about all the refobs it has created using $x$. In practice, an implementation may defer sending any $\InfoMsg$ or $\ReleaseMsg$ messages to a target $B$ until all $A$'s refobs to $B$ are deactivated. This introduces a trade-off between the number of control messages and the rate of simple garbage detection (Section~\ref{sec:chain-lemma}).

Each $\ReleaseMsg$ message for a refob $x$ includes a count $n$ of the number of messages sent using $x$. This ensures that $\ReleaseMsg(x,n)$ is only delivered after all the preceding messages sent along $x$ have been delivered. Once the \textsc{Release} event can be executed, it adds the fact that $x$ has been released to $B$'s knowledge set. Once $C$ has received both an $\InfoMsg$ and $\ReleaseMsg$ message for a refob $x$, it may remove facts about $x$ from its knowledge set using the \textsc{Compaction} event.

Finally, the \textsc{Snapshot} event captures an idle actor's knowledge set. For simplicity, we have omitted the process of disseminating snapshots to an aggregator. Although this event does not change the configuration, it allows us to prove properties about snapshot events at different points in time.

\subsection{Composition and Effects}\label{sec:actor-composition}

\begin{figure}
$\textsc{In}(A,R)$ 
$$\Config{\alpha}{\mu}{\rho}{\chi} \ExternalStep \Config{\alpha}{\mu, \Msg{A}{\AppMsg(\NullToken, R)}}{\rho}{\chi \cup \chi'}$$
\begin{tabular}{ll}
where & $A \in \rho$ and $R = \{ \Refob{x_1}{A}{B_1}, \dots, \Refob{x_n}{A}{B_n} \}$ and $x_1,\dots,x_n$ fresh\\
and & $\{B_1,\dots,B_n\} \cap \dom(\alpha) \subseteq \rho$ and $\chi' = \{B_1,\dots,B_n\} \setminus \dom(\alpha)$ \\
\end{tabular}

\vspace{0.5cm}

$\textsc{Out}(x,B,R)$
$$\Config{\alpha}{\mu,\Msg{B}{\AppMsg(x, R)}}{\rho}{\chi} \ExternalStep \Config{\alpha}{\mu}{\rho \cup \rho'}{\chi}$$
\begin{tabular}{ll}
where $B \in \chi$ and $R = \{ \Refob{x_1}{B}{C_1}, \dots, \Refob{x_n}{B}{C_n} \}$ and $\rho' = \{C_1,\dots,C_n\} \cap \dom(\alpha)$
\end{tabular}

\vspace{0.5cm}

$\textsc{ReleaseOut}(x,B)$ 
$$\Config{\alpha}{\mu,\Msg{B}{\ReleaseMsg(x,n)}}{\rho}{\chi \cup \{B\}} \ExternalStep \Config{\alpha}{\mu}{\rho}{\chi \cup \{B\}}$$

\vspace{0.2cm}

$\textsc{InfoOut}(y,z,A,B,C)$ 
$$\Config{\alpha}{\mu,\Msg{C}{\InfoMsg(y,z,A,B)}}{\rho}{\chi \cup \{C\}} \ExternalStep \Config{\alpha}{\mu}{\rho}{\chi \cup \{C\}}$$

  \caption{Rules for interacting with the outside world.}
  \label{rules:composition}
\end{figure}

We give rules to dictate how internal actors interact with external actors in
Fig.~\ref{rules:composition}. The \textsc{In} and \textsc{Out} rules correspond to similar rules in the standard operational semantics of actors.

Since internal garbage collection protocols are not exposed to the outside world, all $\ReleaseMsg$ and $\InfoMsg$ messages sent to external actors are simply dropped by the \textsc{ReleaseOut} and \textsc{InfoOut} events. Likewise, only $\AppMsg$ messages can enter the system. Since we cannot statically determine when a receptionist's address has been forgotten by all external actors, we assume that receptionists are never terminated. The resulting ``black box'' behavior of our system is the same as the actor systems in \cite{aghaFoundationActorComputation1997}. Hence, in principle DRL can be gradually integrated into a codebase by creating a subsystem for garbage-collected actors.

The \textsc{In} event allows an external actor to send an application-level message to a receptionist $A$ containing a set of refobs $R$, all owned by $A$. Since external actors do not use refobs, the message is sent using the special $\NullToken$ token. All targets in $R$ that are not internal actors are added to the set of external actors.

The \textsc{Out} event delivers an application-level message to an external actor with a set of refobs $R$. All internal actors referenced in $R$ become receptionists because their addresses have been exposed to the outside world.

\subsection{Garbage}\label{sec:garbage-defn}

We can now operationally characterize actor garbage in our model. An actor $A$ can \emph{potentially receive a message} in $\kappa$ if there is a sequence of events (possibly of length zero) leading from $\kappa$ to a configuration $\kappa'$ in which $A$ has an undelivered message. We say that an actor is \emph{terminated} if it is idle and cannot potentially receive a message.

An actor is \emph{blocked} if it satisfies three conditions: (1) it is idle, (2) it is not a receptionist, and (3) it has no undelivered messages; otherwise, it is \emph{unblocked}. We define \emph{potential reachability} as the reflexive transitive closure of the potential acquaintance relation. That is, $A_1$ can potentially reach $A_n$ if and only if there is a sequence of unreleased refobs $(\Refob {x_1} {A_1} {A_2}), \dots, (\Refob {x_n} {A_{n-1}} {A_n})$; recall that a refob $\Refob x A B$ is unreleased if its target $B$ has not yet received a $\ReleaseMsg$ message for $x$.

Notice that an actor can potentially receive a message if and only if it is potentially reachable from an unblocked actor. Hence an actor is terminated if and only if it is only potentially reachable by blocked actors. A special case of this is \emph{simple garbage}, in which an actor is blocked and has no potential inverse acquaintances besides itself.

We say that a set of actors $S$ is \emph{closed} (with respect to the potential inverse acquaintance relation) if, whenever $B \in S$ and there is an unreleased refob $\Refob x A B$, then also $A \in S$. Notice that the closure of a set of terminated actors is also a set of terminated actors.

\section{Chain Lemma}\label{sec:chain-lemma}

To determine if an actor has terminated, one must show that all of its potential inverse acquaintances have terminated. This appears to pose a problem for termination detection, since actors cannot have a complete listing of all their potential inverse acquaintances without some synchronization: actors would need to consult their acquaintances before creating new references to them. In this section, we show that the DRL protocol provides a weaker guarantee that will nevertheless prove sufficient: knowledge about an actor's refobs is \emph{distributed} across the system and there is always a ``path'' from the actor to any of its potential inverse acquaintances.

\begin{figure}
    \centering
    \tikzfig{contents/diagrams/chain-lemma}
    \caption{An example of a chain from $B$ to $x_3$.}
    \label{fig:chain-example}
\end{figure}

Let us construct a concrete example of such a path, depicted by Fig.~\ref{fig:chain-example}. Suppose that $A_1$ spawns $B$, gaining a refob $\Refob{x_1}{A_1}{B}$. Then $A_1$ may use $x_1$ to create $\Refob{x_2}{A_2}{B}$, which $A_2$ may receive and then use $x_2$ to create $\Refob{x_3}{A_3}{B}$. 

At this point, there are unreleased refobs owned by $A_2$ and $A_3$ that are not included in $B$'s knowledge set. However, Fig.~\ref{fig:chain-example} shows that the distributed knowledge of $B,A_1,A_2$ creates a ``path'' to all of $B$'s potential inverse acquaintances. Since $A_1$ spawned $B$, $B$ knows the fact $\Created(x_1)$. Then when $A_1$ created $x_2$, it added the fact $\CreatedUsing(x_1, x_2)$ to its knowledge set, and likewise $A_2$ added the fact $\CreatedUsing(x_2, x_3)$; each fact points to another actor that owns an unreleased refob to $B$ (Fig.~\ref{fig:chain-example} (1)). 

Since actors can remove $\CreatedUsing$ facts by sending $\InfoMsg$ messages, we also consider (Fig.~\ref{fig:chain-example} (2)) to be a ``path'' from $B$ to $A_3$. But notice that, once $B$ receives the $\InfoMsg$ message, the fact $\Created(x_3)$ will be added to its knowledge set and so there will be a ``direct path'' from $B$ to $A_3$. We formalize this intuition with the notion of a \emph{chain} in a given configuration $\Config{\alpha}{\mu}{\rho}{\chi}$:
\begin{definition}
A \emph{chain to $\Refob x A B$} is a sequence of unreleased refobs $(\Refob{x_1}{A_1}{B}),\allowbreak \dots,\allowbreak (\Refob{x_n}{A_n}{B})$ such that:
\begin{itemize}
    \item $\alpha(B) \vdash \Created(\Refob{x_1}{A_1}{B})$;
    \item For all $i < n$, either $\alpha(A_i) \vdash \CreatedUsing(x_i,x_{i+1})$ or the message $\Msg{B}{\InfoMsg(x_i,x_{i+1})}$ is in transit; and
    \item $A_n = A$ and $x_n = x$.
\end{itemize}
\end{definition}

We say that an actor $B$ is \emph{in the root set} if it is a receptionist or if there is an application message $\AppMsg(x,R)$ in transit to an external actor with $B \in \text{targets}(R)$. Since external actors never release refobs, actors in the root set must never terminate.

\begin{restatable}[Chain Lemma]{lemma}{ChainLemma}
\label{lem:chain-lemma}
Let $B$ be an internal actor in $\kappa$. If $B$ is not in the root set, then there is a chain to every unreleased refob $\Refob x A B$. Otherwise, there is a chain to some refob $\Refob y C B$ where $C$ is an external actor.
\end{restatable}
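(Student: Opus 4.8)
The plan is to prove the statement as an \emph{invariant} maintained along every execution, by induction on the number of transitions from the initial configuration $\kappa_0$. In the base case the only internal actor of $\kappa_0$ is $A$, which is not a receptionist and has no undelivered messages, hence is not in the root set; its only unreleased inverse refob is the self-refob $\Refob y A A$, and since $\alpha(A) \vdash \Created(\Refob y A A)$ the one-element sequence $(\Refob y A A)$ is a chain. For the inductive step I assume the invariant holds in $\kappa$ and show it holds in $\kappa'$ whenever $\kappa \Step{e} \kappa'$, by case analysis on the rule of $e$. For each internal target $B$ I would check two things: that every chain guaranteed in $\kappa$ survives (possibly after being repaired), and that any \emph{new} unreleased inverse refob or \emph{new} root-set membership introduced by $e$ is itself furnished with a chain.

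Most rules are immediate because they only add facts or touch state irrelevant to chains. \textsc{Receive}, \textsc{Idle}, \textsc{Snapshot}, \textsc{ReleaseOut}, and \textsc{InfoOut} neither release a refob nor delete a $\Created$ or $\CreatedUsing$ fact that a chain to an internal target could rely on (in particular, an $\InfoMsg$ dropped by \textsc{InfoOut} is addressed to an external actor, so it can never justify a link of a chain to an internal target). \textsc{Spawn} creates $\Refob x A B$ and $\Refob y B B$, but $B$ records $\Created$ for both, giving one-element chains. \textsc{Compaction} removes facts only about a \emph{released} refob, and a released refob can be neither a link nor the root of a valid chain, so nothing is lost.

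The interesting rules are those that move the witness for a link between the three allowed forms. \textsc{Send} creates $\Refob{z_i}{B}{C_i}$ from an active refob $\Refob{y_i}{A}{C_i}$ and records $\CreatedUsing(y_i,z_i)$ at $A$; I obtain a chain to the new refob by taking the chain to $\Refob{y_i}{A}{C_i}$ (which exists by hypothesis, since that refob is unreleased) and appending the link $y_i \to z_i$ justified by the fresh $\CreatedUsing$ fact. When the recipient $B$ is external, this same extension discharges the new root-set obligation for $C_i$: the appended refob $\Refob{z_i}{B}{C_i}$ is owned by an external actor and is never released, so it is exactly the external refob required by the second clause. \textsc{SendInfo} deletes $\CreatedUsing(y,z)$ but places $\InfoMsg(y,z,B)$ in transit to the very target of the chain, re-justifying the link by the second alternative; \textsc{Info} deletes that in-transit message but adds $\Created(z)$ at the target, which lets me re-root any affected chain at $z$ while leaving chains ending at or before the link untouched. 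The root-set bookkeeping for \textsc{Out} and \textsc{In} is handled by noting that every internal actor newly placed in (or kept in) the root set was already in the root set in $\kappa$ --- because of the in-transit $\AppMsg$ to an external actor that \textsc{Out} consumes, or because the internal targets are already receptionists in \textsc{In} --- so the external chain promised by the hypothesis persists verbatim.

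The main obstacle is \textsc{Release}, the only rule that actually makes a refob $\Refob{x}{A}{B}$ released and can therefore destroy a link $x \to x'$ on which a downstream chain depends. The crux is to show that at the moment $x$ is released the target $B$ already knows $\Created(x')$ for every $x'$ created using $x$, so the chain can be re-rooted at $x'$. This is where the message-count discipline is essential: \textsc{SendRelease} may fire for $x$ only when no $\CreatedUsing(x,\cdot)$ fact remains, so every $\InfoMsg(x,x',\cdot)$ has already been sent and is counted in the value $n$ carried by $\ReleaseMsg(x,n)$; and \textsc{Release} may fire only when $\Phi \vdash \RecvCount(x,n)$, i.e.\ after all $n$ messages sent along $x$ --- in particular those $\InfoMsg$ messages --- have been delivered. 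Hence $B \vdash \Created(x')$ when $x$ is released, the chain is re-rooted at $x'$, and chains ending at or before $x$ need no repair (the one to $x$ itself is simply discharged, since $x$ is no longer unreleased). I expect formalizing this count argument --- relating the number sent along $x$ at deactivation to the number received at release, and ruling out any later $\InfoMsg$ along $x$ --- to be the one step requiring genuine care; everything else is bookkeeping over the fixed, finite list of rules.
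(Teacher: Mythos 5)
Your proposal is correct and takes essentially the same route as the paper's own proof: induction over events with the same case analysis, the same handling of \textsc{Spawn}, \textsc{Send} (appending a link justified by the fresh $\CreatedUsing$ fact, with the new external-owned refob serving as the root-set witness), \textsc{SendInfo}/\textsc{Info} (moving the link witness between the $\CreatedUsing$ fact and the in-transit $\InfoMsg$), and the identical crux for \textsc{Release} --- the \textsc{SendRelease} precondition forces every $\InfoMsg(x,x',\cdot)$ to be sent along $x$ first, and the count check in \textsc{Release} forces its delivery, so the chain re-roots at $x'$. One small imprecision: in the \textsc{Send} case the induction hypothesis guarantees a chain to $\Refob{y_i}{A}{C_i}$ only when $C_i$ is \emph{not} in the root set; the paper splits on this explicitly, but the omission is harmless since in the root-set case the old external witness persists and nothing further is required.
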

\begin{remark*}
When $B$ is in the root set, not all of its unreleased refobs are guaranteed to have chains. This is because an external actor may send $B$'s address to other receptionists without sending an $\InfoMsg$ message to $B$.
\end{remark*}

An immediate application of the Chain Lemma is to allow actors to detect when they are simple garbage. If any actor besides $B$ owns an unreleased refob to $B$, then $B$ must have a fact $\Created(\Refob x A B)$ in its knowledge set where $A \ne B$. Hence, if $B$ has no such facts, then it must have no nontrivial potential inverse acquaintances. Moreover, since actors can only have undelivered messages along unreleased refobs, $B$ also has no undelivered messages from any other actor; it can only have undelivered messages that it sent to itself. This gives us the following result:

\begin{theorem}
    Suppose $B$ is idle with knowledge set $\Phi$, such that:
    \begin{itemize}
        \item $\Phi$ does not contain any facts of the form $\Created(\Refob x A B)$ where $A \ne B$; and
        \item for all facts $\Created(\Refob x B B) \in \Phi$, also $\Phi \vdash \SentCount(x,n) \land \RecvCount(x,n)$ for some $n$.
    \end{itemize}
    Then $B$ is simple garbage.
\end{theorem}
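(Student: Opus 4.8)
The statement unfolds the definition of \emph{simple garbage}: I must verify that $B$ is \emph{blocked} (idle, not a receptionist, and free of undelivered messages) and that $B$ has no potential inverse acquaintances besides itself. The plan is to establish the acquaintance condition first, because knowing that every unreleased refob targeting $B$ is a self-refob $\Refob x B B$ is precisely what licenses the later reasoning about $B$'s undelivered messages. Throughout I read the first hypothesis as ``$\Phi \nvdash \Created(\Refob x A B)$ for $A \ne B$'', i.e.\ via the derivability relation $\vdash$ rather than literal membership; this is necessary, since $B$ can derive $\Created(\Refob{x_1}{A_1}{B})$ from a fact $\CreatedUsing(\Refob w B B, \Refob{x_1}{A_1}{B})$ with $A_1 \ne B$ (this is $B$ handing another actor a reference to itself), and such a $B$ is genuinely not simple garbage.

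\textbf{No nontrivial inverse acquaintances.} Suppose for contradiction that some $A \ne B$ owns an unreleased refob $\Refob x A B$. I feed this to \cref{lem:chain-lemma}: if $B$ is not in the root set the lemma yields a chain $(\Refob{x_1}{A_1}{B}), \dots, (\Refob{x_n}{A_n}{B})$ ending at $\Refob x A B$, so $A_n = A \ne B$; if $B$ is in the root set it yields a chain whose final owner $A_n$ is external, hence again $A_n \ne B$. I then show by induction along the chain that every $A_i$ equals $B$, which contradicts $A_n \ne B$. For the base case, the chain head satisfies $\alpha(B) \vdash \Created(\Refob{x_1}{A_1}{B})$, so the first hypothesis forces $A_1 = B$. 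For the step, assuming $A_i = B$, the chain link is either $\alpha(B) \vdash \CreatedUsing(x_i, x_{i+1})$ --- which gives $\alpha(B) \vdash \Created(\Refob{x_{i+1}}{A_{i+1}}{B})$ by the derivation rule for $\CreatedUsing$ and hence $A_{i+1} = B$ --- or an undelivered $\InfoMsg(x_i, x_{i+1})$ addressed to $B$. The second alternative is the crux and is discussed below; ruling it out completes the induction and simultaneously shows $B$ cannot be in the root set, so in particular $B$ is not a receptionist.

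\textbf{No undelivered messages, hence blocked.} Once all unreleased refobs targeting $B$ are self-refobs, I invoke the invariant that any undelivered message to $B$ travels along some unreleased refob $\Refob x A B$ (this is the role of the send count carried by each $\ReleaseMsg$, as made precise by the message-count lemma). Thus every undelivered message to $B$ is a self-message sent along a self-refob $\Refob x B B$. The second hypothesis asserts that send and receive counts agree for every such refob that $B$ records, and the message-count lemma then bounds the number of outstanding messages along these refobs by zero; so $B$ has no undelivered messages. Together with idleness (given) and non-receptionhood (from the previous paragraph), $B$ is blocked, and with the acquaintance condition this certifies $B$ as simple garbage.

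\textbf{Main obstacle.} The hard part is the in-transit $\InfoMsg$ case of the chain induction, which reveals a circularity: excluding such a message is itself a claim about $B$'s undelivered messages, yet that claim was to be settled only afterward. I would break the loop by taking the \emph{first} $\InfoMsg(x_i, x_{i+1})$ link along the chain; all earlier owners are already forced to be $B$, so this message travels along $x_i = \Refob{x_i}{B}{B}$ and $B$ derives $\Created(\Refob{x_i}{B}{B})$ from the preceding $\CreatedUsing$ links, whereupon the second hypothesis and the message-count lemma contradict the message being outstanding. Making this airtight requires care with exactly which facts are literally present in $\Phi$ versus merely derivable --- since both hypotheses are stated over the contents of $\Phi$ --- and with the supporting invariants that every message (including $\InfoMsg$ and $\ReleaseMsg$) is charged to an unreleased refob whose send count is correctly maintained. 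I expect these bookkeeping invariants, rather than the conceptual skeleton, to be where the genuine effort lies; the Chain Lemma does the essential work of localizing all of $B$'s inverse acquaintances within $B$'s own derivable knowledge.
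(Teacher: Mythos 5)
Your proposal takes essentially the same route as the paper, whose entire justification is the prose paragraph preceding the theorem: the Chain Lemma localizes every nontrivial potential inverse acquaintance of $B$ to knowledge derivable in $B$'s own state, and Lemmas~\ref{lem:release-is-final} and~\ref{lem:msg-counts} then reduce blockedness to send/receive count agreement on $B$'s self-refobs. Your two refinements — reading the first hypothesis via $\vdash$ rather than literal membership (forced by the rule that $\CreatedUsing(x,y)$ derives $\Created(y)$, exactly as your self-reference example shows) and discharging the in-transit $\InfoMsg$ chain link by taking the first such link, which must travel along a self-refob whose message counts the second hypothesis constrains — are genuine subtleties that the paper's informal argument silently elides, and you handle them correctly.
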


\section{Termination Detection}\label{sec:termination-detection}

In order to detect non-simple terminated garbage, actors periodically sends a snapshot of their knowledge set to a snapshot aggregator actor.  An aggregator in turn may disseminate snapshots it has to other aggregators.  Each aggregator maintains a map data structure, associating an actor’s address to its most recent snapshot; in effect, snapshot aggregators maintain an eventually consistent key-value store with addresses as keys and snapshots as values. At any time, an aggregator can scan its local store to find terminated actors and send them a request to self-destruct.

Given an arbitrary set of snapshots $Q$, we characterize the \emph{finalized subsets} of $Q$ in this section. We show that the actors that took these finalized snapshots must be terminated. Conversely, the snapshots of any closed set of terminated actors are guaranteed to be finalized. (Recall that the closure of a set of terminated actors is also a terminated set of actors.) Thus, snapshot aggregators can eventually detect all terminated actors by periodically searching their local stores for finalized subsets. Finally, we give an algorithm for obtaining the maximum finalized subset of a set $Q$ by ``pruning away’’ the snapshots of actors that appear not to have terminated.

Recall that when we speak of a set of snapshots $Q$, we assume each snapshot was taken by a different actor. We will write $\Phi_A \in Q$ to denote $A$'s snapshot in $Q$; we will also write $A \in Q$ if $A$ has a snapshot in $Q$. We will also write $Q \vdash \phi$ if $\Phi \vdash \phi$ for some $\Phi \in Q$.

\begin{definition}
A set of snapshots $Q$ is \emph{closed} if, whenever $Q \vdash \Unreleased(\Refob x A B)$ and $B \in Q$, then also $A\in Q$ and $\Phi_A \vdash \Activated(\Refob x A B)$.
\end{definition}
\begin{definition}
An actor $B \in Q$ \emph{appears blocked} if, for every $Q \vdash \Unreleased(\Refob x A B)$, then $\Phi_A,\Phi_B \in Q$ and $\Phi_A \vdash \SentCount(x,n)$ and $\Phi_B \vdash \RecvCount(x,n)$ for some $n$.
\end{definition}
\begin{definition}
A set of snapshots $Q$ is \emph{finalized} if it is closed and every actor in $Q$ appears blocked. 
\end{definition}

This definition corresponds to our characterization in Section~\ref{sec:garbage-defn}: An actor is terminated precisely when it is in a closed set of blocked actors.

\begin{restatable}[Safety]{theorem}{Safety}\label{thm:safety}
If $Q$ is a finalized set of snapshots at time $t_f$ then the actors in $Q$ are all terminated at $t_f$.
\end{restatable}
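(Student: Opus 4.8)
The plan is to show the contrapositive at the level of the global configuration: if $Q$ is finalized at time $t_f$ but some actor $B \in Q$ is \emph{not} terminated at $t_f$, then $B$ can potentially receive a message, and I will trace the cause of that message back to an inconsistency in $Q$. By the characterization in \cref{sec:garbage-defn}, $B$ can potentially receive a message iff it is potentially reachable from an unblocked actor. So the goal reduces to proving two things about a finalized $Q$: first, that every actor in $Q$ is genuinely blocked at $t_f$ (not merely ``appears blocked''), and second, that the set of actors in $Q$ is genuinely closed under potential inverse acquaintance at $t_f$ (not merely ``closed'' in the snapshot sense). Together these say $Q$ is a closed set of blocked actors, which by the observation at the end of \cref{sec:garbage-defn} means every actor in $Q$ is terminated.

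\medskip

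\noindent
\textbf{Step 1 (genuine closure via the Chain Lemma).} First I would argue that if $B \in Q$ has an unreleased refob $\Refob x A B$ \emph{at time $t_f$}, then $A \in Q$ as well. The danger is that $A$'s refob is ``hidden'': $A$ may own an unreleased refob that does not yet appear as $\Unreleased(\Refob x A B)$ derivable from any snapshot in $Q$, so the snapshot-level closure definition does not obviously apply. This is exactly what the Chain Lemma (\cref{lem:chain-lemma}) is for. I would invoke it to obtain a chain from $B$ to $\Refob x A B$, i.e. a sequence of unreleased refobs starting from one recorded by $B$'s own $\Created$ fact and linked by $\CreatedUsing$ facts (or in-transit $\InfoMsg$ messages). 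The plan is then to walk along this chain and show inductively that \emph{every} actor on it must also be in $Q$: the snapshot-level closure condition applied to $B$'s $\Created$ fact forces the first link's owner into $Q$ with an $\Activated$ fact, and the ``appears blocked'' condition will be used to rule out the in-transit $\InfoMsg$ case (see Step 3), so each $\CreatedUsing$ fact must be present in a snapshot in $Q$, pushing the next owner into $Q$, and so on until we reach $A$. Also handle the root-set case: if $B$ were in the root set it could never be blocked, which contradicts Step 2, so $B$ is not in the root set and the first (strong) conclusion of the Chain Lemma applies.

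\medskip

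\noindent
\textbf{Step 2 (appears blocked $\Rightarrow$ blocked).} Next I would show that each $B \in Q$ is actually blocked at $t_f$: idle, not a receptionist, and with no undelivered messages. The subtle part is the ``no undelivered messages'' clause, because ``appears blocked'' only equates send and receive counts \emph{within the snapshots}, and snapshots are taken at different local times. The key tool is \cref{lem:msg-counts} (referenced in the overview as bounding undelivered messages via message counts): for each unreleased $\Refob x A B$ with $A,B \in Q$, the matching counts $\SentCount(x,n)$ and $\RecvCount(x,n)$, together with the fact that $\ReleaseMsg$ carries the final send count and is delivered only after all messages along $x$, must imply that no $\AppMsg$ or $\InfoMsg$ along $x$ is in transit at the time of $B$'s snapshot, and hence (since $B$ takes no further actions while idle and receives nothing) none is in transit at $t_f$. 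Because Step 1 has already established that \emph{all} owners of unreleased refobs to $B$ lie in $Q$ and appear blocked, I can range over every such $x$ and conclude $B$ has no undelivered messages from anyone at $t_f$.

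\medskip

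\noindent
\textbf{Step 3 (ruling out in-transit $\InfoMsg$, and consistency).} The \textbf{main obstacle} I anticipate is the second bullet of the chain definition: a link may be justified not by a $\CreatedUsing$ fact in some snapshot but by an $\InfoMsg$ message \emph{in transit} to $B$. An in-transit $\InfoMsg$ along some refob $y$ to $B$ is an undelivered message to $B$, so it would both break Step 2's blockedness conclusion and leave the next chain link's owner potentially outside $Q$. I must therefore show that finalization forecloses this case. The plan is to argue that the sender of such an $\InfoMsg$ along $\Refob y {A_i} B$ incremented $\SentCount(y,\cdot)$ when it sent it (per the \textsc{SendInfo} rule), and an undelivered $\InfoMsg$ means $B$'s recorded $\RecvCount(y,\cdot)$ is strictly smaller — contradicting either the ``appears blocked'' count-agreement for $y$ (if $A_i \in Q$) or, if $A_i \not\in Q$, contradicting the closure obtained in Step 1. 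This is delicate because it is mutually entangled with Steps 1 and 2: closure is used to place $A_i$ in $Q$, and appears-blockedness is used to eliminate the in-transit message, so I expect to run all three as a \emph{single simultaneous induction along the chain}, where each chain step reestablishes the invariants (owner in $Q$, its $\CreatedUsing$ fact present, no in-transit $\InfoMsg$) needed to take the next step. Implicit throughout — and the reason finalization is a genuinely useful hypothesis — is the stronger property advertised in \cref{sec:overview} that any ``appears terminated'' set is automatically consistent, so I need not separately assume $Q$ is a consistent cut; the matching message counts supply exactly the consistency needed for \cref{lem:msg-counts} to apply. Once the induction closes, $Q$ is a closed set of genuinely blocked actors at $t_f$, and terminatedness follows.
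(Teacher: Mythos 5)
There is a genuine gap, and it sits exactly where you located the difficulty: the circularity you flag in Step~3 is real, but an induction \emph{along the chain} cannot break it. Your ingredients are the right ones (the Chain Lemma for closure, \cref{lem:msg-counts} for blockedness, count agreement to kill in-transit $\InfoMsg$ messages), yet two temporal obligations cannot be discharged chain-locally. First, to apply \cref{lem:msg-counts} with $t_1 = t_A$ and $t_2 = t_B$ in the case $t_A < t_B$, the lemma explicitly requires that $A$ sends nothing along $x$ during $[t_A, t_B]$; second, even granting that, you must exclude messages $A$ sends along $x$ in $(t_A, t_f]$, which are invisible to the $\SentCount$ recorded in $\Phi_A$. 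Both obligations amount to ``$A$ stays blocked after its snapshot,'' i.e., an instance of the very statement being proved, now for $A$. A chain to $B$ enumerates $B$'s potential inverse acquaintances, but the blockedness of a chain member $A_i$ over a time interval depends on $A_i$'s \emph{own} potential inverse acquaintances --- possibly including $B$ itself, since the reference graph may be cyclic (cyclic garbage being the motivating case). So a ``simultaneous induction along the chain'' has no decreasing measure. The paper breaks the circularity by inducting over the \emph{events of the execution} instead: it maintains, at every $t \le t_f$, the invariants that (IH~1) every refob $\Refob x A B$ unreleased at $t$ with $B$ already snapshotted satisfies $Q \vdash \Unreleased(x)$, and (IH~2) every already-snapshotted actor is blocked at $t$, and shows every transition preserves them --- e.g., a \textsc{Send} along $\Refob x A B$ with $t_B < t < t_A$ contradicts the agreed counts via \cref{lem:msg-counts}, and similarly for \textsc{SendInfo} and \textsc{SendRelease}. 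Your Steps~1 and~3 then reappear inside the \textsc{Snapshot} case (the paper's Lemmas~\ref{lem:complete-ref} and~\ref{lem:complete-chains}), where the subcase of a chain member whose snapshot precedes $t$ is discharged precisely by invoking IH~2 over the interval between the two snapshots --- the step your static-at-$t_f$ formulation has no license for.

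A second, smaller slip: your root-set exclusion (``if $B$ were in the root set it could never be blocked, contradicting Step~2'') covers only receptionists. The root set also contains any $B$ with an in-transit $\AppMsg$ to an \emph{external} actor carrying a refob to $B$, and such a $B$ can be idle, a non-receptionist, and free of undelivered messages --- hence blocked under the paper's definition, so no contradiction with Step~2 arises. The paper argues differently: if $B$ were in the root set, the Chain Lemma guarantees only a chain to some refob $\Refob y C B$ owned by an external actor $C$; Lemma~\ref{lem:complete-chains} then gives $Q \vdash \Unreleased(y)$, and closure of $Q$ would force a snapshot of $C$ into $Q$, which is impossible because external actors take no snapshots. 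Note that this route uses closure rather than blockedness, so it also avoids adding yet another cycle between your Steps~1 and~2.
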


We say that the \emph{final action} of a terminated actor is the last non-snapshot event it performs before becoming terminated. Notice that an actor's final action can only be an \textsc{Idle}, \textsc{Info}, or \textsc{Release} event. Note also that the final action may come \emph{strictly before} an actor becomes terminated, since a blocked actor may only terminate after all of its potential inverse acquaintances become blocked.

The following lemma allows us to prove that DRL is eventually live. It also shows that an non-finalized set of snapshots must have an unblocked actor.
\begin{restatable}{lemma}{Completeness}\label{lem:terminated-is-complete}
    Let $S$ be a closed set of terminated actors at time $t_f$. If every actor in $S$ took a snapshot sometime after its final action, then the resulting set of snapshots is finalized.
\end{restatable}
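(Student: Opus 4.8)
The plan is to first isolate a \emph{stabilization} property of terminated actors and then verify the two defining halves of ``finalized'' (closedness of $Q$ and that every member appears blocked) by translating each snapshot statement into a statement about the actual configuration $\kappa_{t_f}$. The stabilization observation is that once an actor performs its final action it is idle and can never again receive a message, so the only events it can subsequently perform are \textsc{Snapshot} events, and these leave its knowledge set unchanged. Hence for each $A \in S$ the recorded snapshot $\Phi_A$ equals $A$'s knowledge set at $t_f$. This lets me reason in $\kappa_{t_f}$, where I may freely use that every actor of $S$ is blocked (idle, non-receptionist, no undelivered messages) and that $S$ is closed under the potential inverse acquaintance relation.

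For closedness of $Q$, I would take $B \in Q$ with $Q \vdash \Unreleased(\Refob x A B)$ and argue that $\Refob x A B$ is genuinely unreleased at $t_f$. Granting this, actor-level closedness of $S$ immediately yields $A \in S = Q$, and it remains to show $\Phi_A \vdash \Activated(\Refob x A B)$. The latter follows because $A$ is blocked: the refob cannot still be pending, since then the message carrying it would be an undelivered message for $A$; and $A$ cannot have deactivated it, since then the matching $\ReleaseMsg$ would either be in transit to $B$ (contradicting that $B$ has no undelivered messages) or already delivered (contradicting unreleasedness). So $A$ still holds $x$ active, and by stabilization $\Activated(\Refob x A B) \in \Phi_A$.

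For the appears-blocked condition I would reuse the closedness step: for each $B \in Q$ and each $Q \vdash \Unreleased(\Refob x A B)$ I already have $A \in Q$ with $\Phi_A \vdash \Activated(\Refob x A B)$, hence $\Phi_A \vdash \SentCount(x,n_A)$ and $\Phi_B \vdash \RecvCount(x,n_B)$ for some $n_A,n_B$ (invoking the default-zero derivation rules when no explicit count is recorded). Because $B$ is terminated, no message along $x$ is in transit at $t_f$, and because $A,B$ have stabilized, $n_A$ counts exactly the messages $A$ ever sent along $x$ and $n_B$ exactly those $B$ ever received. The conservation bound of \cref{lem:msg-counts} then forces $n_A=n_B$, so $B$ appears blocked; combined with closedness this gives that $Q$ is finalized.

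I expect the crux to be the very first claim of the closedness step: that $Q \vdash \Unreleased(\Refob x A B)$ with $B$ terminated reflects a \emph{genuinely} unreleased refob. The difficulty is that $\Unreleased(\Refob x A B)$ need not be witnessed by $B$'s own $\Created$ fact, but may instead be witnessed by a $\CreatedUsing$ ``contact-tracing'' fact held by some creator $C \in Q$ — precisely the distributed knowledge that \cref{lem:chain-lemma} assembles into a path back to $B$. Here I would apply the Chain Lemma: since $B$ is terminated it is not in the root set and has no in-transit $\InfoMsg$ messages, so every chain to an unreleased refob of $B$ is realized purely by $\CreatedUsing$ facts held by actors which, being owners of unreleased refobs to $B \in S$, must lie in $S = Q$. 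The delicate point that I would need to handle carefully is ruling out a \emph{stale} witness, i.e.\ a $\CreatedUsing$ fact naming a refob that has since been released; this is where I would have to track the release protocol to show that such a configuration is incompatible with $B$ being terminated, and it is the part of the argument I would develop most cautiously.
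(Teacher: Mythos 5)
Your overall architecture matches the paper's: a stabilization observation (after its final action an actor's knowledge set no longer changes, so each snapshot equals the actor's knowledge at $t_f$), followed by the closedness and appears-blocked checks, with \cref{lem:msg-counts} discharging the count agreement exactly as the paper does. Indeed, your stabilization step is a clean substitute for the paper's strong induction on snapshot times, which is largely vestigial in its own writeup (one caveat: \textsc{Compaction} is an idle-actor event that the definition of ``final action'' does not rule out after termination, so ``only \textsc{Snapshot} events remain'' needs a remark that post-termination compactions only remove facts about released refobs and are harmless here). However, the step you explicitly defer---ruling out a stale $\CreatedUsing$ witness for $Q \vdash \Unreleased(\Refob x A B)$---is precisely the content of the paper's helper lemma (\cref{lem:completeness-helper}), i.e.\ the crux of the whole result, and your plan for attacking it would fail on both counts. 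First, \cref{lem:chain-lemma} points the wrong way: it says every \emph{genuinely} unreleased refob has distributed witnesses (which is what the \emph{safety} proof needs), not that snapshot witnesses are genuine (which is what is needed here). Second, the statement you propose to prove---that a stale witness is incompatible with $B$ being terminated---is false. Concretely: let $C$ create $\Refob x A B$ using $\Refob y C B$ and never send the corresponding $\InfoMsg$; let $A$ deactivate $x$ and let $B$ process the $\ReleaseMsg(x,n)$; all three actors may then terminate with $\CreatedUsing(y,x)$ still sitting in $\Phi_C$. Stale witnesses coexist perfectly well with termination.

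What actually saves the lemma, and what the paper's helper lemma proves, is that in any such scenario $B$'s snapshot must still contain $\Released(x)$, so the witness is \emph{neutralized} rather than impossible. The mechanism pairs \cref{lem:facts-remain-until-cancelled} with the uniqueness of the $\InfoMsg$ per refob: $\Released(x)$ can only be removed from $B$'s knowledge by \textsc{Compaction}, which requires $B$ to first receive the unique $\InfoMsg$ containing $x$; but a post-final-action snapshot containing $\CreatedUsing(y,x)$ certifies that $C$ never sends that message. Hence if no snapshot in $Q$ derives $\Released(x)$---note this forces the reading of $Q \vdash \Unreleased(x)$ as ``$Q \vdash \Created(x)$ and $Q \not\vdash \Released(x)$,'' which is how the paper's own proof of \cref{lem:completeness-helper} proceeds, rather than the per-snapshot reading under which the stale-witness scenario above would be a counterexample---then $x$ is genuinely unreleased at $t_B$, and since $B$ performs no \textsc{Release} event after its final action, unreleased at $t_f$ as well. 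From that point your closure argument ($A \in S = Q$), your pending/deactivated case analysis for $\Phi_A \vdash \Activated(x)$, and your count argument all go through as sketched. So the gap is a single missing idea, but it is the load-bearing one: without the persistence-plus-uniqueness argument, the reduction from snapshot facts to facts about $\kappa_{t_f}$---on which your entire plan rests---is unjustified.
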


\begin{theorem}[Liveness]\label{thm:liveness}
If every actor eventually takes a snapshot after performing an \textsc{Idle}, \textsc{Info}, or \textsc{Release} event, then every terminated actor is eventually part of a finalized set of snapshots.
\end{theorem}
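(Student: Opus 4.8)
The plan is to reduce the theorem almost entirely to \cref{lem:terminated-is-complete}; the real work lies in verifying that lemma's hypotheses for a suitable closed set containing the given terminated actor. Fix a terminated actor $A$ and let $t_f$ be a time at which $A$ is terminated. First I would form the set $S$ obtained by closing $\{A\}$ under the potential inverse acquaintance relation as it stands in $\kappa_{t_f}$: that is, $S$ consists of $A$ together with every actor from which $A$ is potentially reachable at $t_f$. Since $\kappa_{t_f}$ is derivable from the initial configuration, it contains only finitely many actors and refobs, so $S$ is finite; and since $A$ is terminated and the closure of a set of terminated actors is again a set of terminated actors (\cref{sec:garbage-defn}), $S$ is a closed set of terminated actors at $t_f$.

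Next I would check that the theorem's hypothesis applies to each member of $S$. Every $B \in S$ is terminated, so in particular it must have become idle at least once (every actor starts out busy, and termination requires idleness), and therefore $B$ has a well-defined \emph{final action}, namely the last non-snapshot event it performs before becoming terminated. By the observation preceding \cref{lem:terminated-is-complete}, this final action is necessarily an \textsc{Idle}, \textsc{Info}, or \textsc{Release} event. Applying the theorem's hypothesis to that event, $B$ eventually takes a snapshot after its final action. Because the final action is the last non-snapshot event $B$ ever performs (once terminated it performs only \textsc{Snapshot} events, which leave its knowledge set unchanged), any such snapshot faithfully records $B$'s terminated state.

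Finally I would assemble the snapshots. Since $S$ is finite and each of its members eventually takes a post-final-action snapshot, there is a time $t^*$ by which all members of $S$ have done so; collecting one such snapshot per actor yields a set $Q$ with $A \in Q$. The hypotheses of \cref{lem:terminated-is-complete} are now met --- $S$ is a closed set of terminated actors at $t_f$, and every actor in $S$ has taken a snapshot after its final action --- so $Q$ is finalized. As $A \in S$, the actor $A$ is part of the finalized set $Q$, which exists at $t^*$; this is exactly the assertion that $A$ is eventually part of a finalized set of snapshots.

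The main obstacle is not the final assembly but the supporting facts that make \cref{lem:terminated-is-complete} applicable. In particular, one must know that the closure $S$ is finite and, implicitly, stable: no new unreleased refob to $A$ or to any member of $S$ can appear once these actors are terminated, because any actor capable of creating such a refob (via \textsc{Send} or \textsc{SendInfo}) would itself be a potential inverse acquaintance and hence terminated, performing no further actions. One must also know that every terminated actor genuinely has a final action of one of the three permitted kinds. Both points rest on the operational characterization of termination in \cref{sec:garbage-defn} rather than on new reasoning, so once they are established the theorem follows immediately.
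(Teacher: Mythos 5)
Your proposal is correct and follows essentially the same route as the paper's own proof: take the closure $S$ of $\{A\}$ under the potential inverse acquaintance relation, note that the closure of a set of terminated actors is terminated, and invoke \cref{lem:terminated-is-complete} on the post-final-action snapshots guaranteed by the hypothesis. The extra details you supply (finiteness of $S$, existence and type of the final action, assembling the snapshots at a common time $t^*$) are points the paper leaves implicit but do not constitute a different argument.
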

\begin{proof}
    If an actor $A$ is terminated, then the closure $S$ of $\{A\}$ is a terminated set of actors. Since every actor eventually takes a snapshot after taking its final action, \cref{lem:terminated-is-complete} implies that the resulting snapshots of $S$ are finalized.
\end{proof}

We say that a refob $\Refob x A B$ is \emph{unreleased} in $Q$ if $Q \vdash \Unreleased(x)$. Such a refob is said to be \emph{relevant} when $B \in Q$ implies $A \in Q$ and $\Phi_A \vdash \Activated(x)$ and $\Phi_A \vdash \SentCount(x,n)$ and $\Phi_B \vdash \RecvCount(x,n)$ for some $n$; intuitively, this indicates that $B$ has no undelivered messages along $x$. Notice that a set $Q$ is finalized if and only if all unreleased refobs in $Q$ are relevant.

Observe that if $\Refob x A B$ is unreleased and irrelevant in $Q$, then $B$ cannot be in any finalized subset of $Q$. We can therefore employ a simple iterative algorithm to find the maximum finalized subset of $Q$: for each irrelevant unreleased refob $\Refob x A B$ in $Q$, remove the target $B$ from $Q$. Since this can make another unreleased refob $\Refob y B C$ irrelevant, we must repeat this process until a fixed point is reached. In the resulting subset $Q'$, all unreleased refobs are relevant. Since all actors in $Q \setminus Q'$ are not members of any finalized subset of $Q$, it must be that $Q'$ is the maximum finalized subset of $Q$.

\section{Conclusion and Future Work}\label{sec:conclusion}\label{sec:future-work}

We have shown how deferred reference listing and message counts can be used to detect termination in actor systems. The technique is provably safe (Theorem~\ref{thm:safety}) and eventually live (Theorem~\ref{thm:liveness}). An implementation in Akka is presently underway.

We believe that DRL satisfies our three initial goals:
\begin{enumerate}
    \item \emph{Termination detection does not restrict concurrency in the application.} Actors do not need to coordinate their snapshots or pause execution during garbage collection.
    \item \emph{Termination detection does not impose high overhead.} The amortized memory overhead of our technique is linear in the number of unreleased refobs. Besides application messages, the only additional control messages required by the DRL communication protocol are $\InfoMsg$ and $\ReleaseMsg$ messages. These control messages can be batched together and deferred, at the cost of worse termination detection time.
    \item \emph{Termination detection scales with the number of nodes in the system.} Our algorithm is incremental, decentralized, and does not require synchronization between nodes.
\end{enumerate}

Since it does not matter what order snapshots are collected in, DRL can be used as a ``building block’’ for more sophisticated garbage collection algorithms. One promising direction is to take a \emph{generational} approach \cite{DBLP:journals/cacm/LiebermanH83}, in which long-lived actors take snapshots less frequently than short-lived actors. Different types of actors could also take snapshots at different rates. In another approach, snapshot aggregators could \emph{request} snapshots instead of waiting to receive them.

In the presence of faults, DRL remains safe but its liveness properties are affected. If an actor $A$ crashes and its state cannot be recovered, then none of its refobs can be released and the aggregator will never receive its snapshot. Consequently, all actors potentially reachable from $A$ can no longer be garbage collected. However, $A$'s failure does not affect the garbage collection of actors it cannot reach. In particular, network partitions between nodes will not delay node-local garbage collection.

Choosing an adequate fault-recovery protocol will likely vary depending on the target actor framework. One option is to use checkpointing or event-sourcing to persist GC state; the resulting overhead may be acceptable in applications that do not frequently spawn actors or create refobs. Another option is to monitor actors for failure and infer which refobs are no longer active; this is a subject for future work.

Another issue that can affect liveness is message loss: If any messages along a refob $\Refob x A B$ are dropped, then $B$ can never be garbage collected because it will always appear unblocked. This is, in fact, the desired behavior if one cannot guarantee that the message will not be delivered at some later point. In practice, this problem might be addressed with watermarking.

\bibliography{contents/bibliography}

\begin{thebibliography}{10}

\bibitem{books/daglib/0066897}
Gul Agha.
\newblock {\em {{ACTORS}} - a Model of Concurrent Computation in Distributed
  Systems}.
\newblock {{MIT Press}} Series in Artificial Intelligence. {MIT Press}, 1990.

\bibitem{journals/cacm/Agha90}
Gul Agha.
\newblock Concurrent object-oriented programming.
\newblock {\em Communications of the ACM}, 33(9):125--141, September 1990.

\bibitem{aghaFoundationActorComputation1997}
Gul~A. Agha, Ian~A. Mason, Scott~F. Smith, and Carolyn~L. Talcott.
\newblock A foundation for actor computation.
\newblock {\em Journal of Functional Programming}, 7(1):1--72, January 1997.
\newblock \href {https://doi.org/10.1017/S095679689700261X}
  {\path{doi:10.1017/S095679689700261X}}.

\bibitem{Akka}
Akka.
\newblock https://akka.io/.

\bibitem{armstrongConcurrentProgrammingERLANG1996}
Joe Armstrong, Robert Virding, Claes Wikstr{\"o}m, and Mike Williams.
\newblock {\em Concurrent Programming in {{ERLANG}}}.
\newblock {Prentice Hall}, {Englewood Cliffs, New Jersey}, second edition,
  1996.

\bibitem{bevanDistributedGarbageCollection1987}
Di~Bevan.
\newblock Distributed garbage collection using reference counting.
\newblock In G.~Goos, J.~Hartmanis, D.~Barstow, W.~Brauer, P.~Brinch~Hansen,
  D.~Gries, D.~Luckham, C.~Moler, A.~Pnueli, G.~Seegm{\"u}ller, J.~Stoer,
  N.~Wirth, J.~W. Bakker, A.~J. Nijman, and P.~C. Treleaven, editors, {\em
  {{PARLE Parallel Architectures}} and {{Languages Europe}}}, volume 259, pages
  176--187. {Springer Berlin Heidelberg}, {Berlin, Heidelberg}, 1987.
\newblock \href {https://doi.org/10.1007/3-540-17945-3_10}
  {\path{doi:10.1007/3-540-17945-3_10}}.

\bibitem{blessingTreeTopologiesCausal2017}
Sebastian Blessing, Sylvan Clebsch, and Sophia Drossopoulou.
\newblock Tree topologies for causal message delivery.
\newblock In {\em Proceedings of the 7th {{ACM SIGPLAN International Workshop}}
  on {{Programming Based}} on {{Actors}}, {{Agents}}, and {{Decentralized
  Control}} - {{AGERE}} 2017}, pages 1--10, {Vancouver, BC, Canada}, 2017. {ACM
  Press}.
\newblock \href {https://doi.org/10.1145/3141834.3141835}
  {\path{doi:10.1145/3141834.3141835}}.

\bibitem{bykovOrleansCloudComputing2011}
Sergey Bykov, Alan Geller, Gabriel Kliot, James~R. Larus, Ravi Pandya, and
  Jorgen Thelin.
\newblock Orleans: Cloud computing for everyone.
\newblock In {\em Proceedings of the 2nd {{ACM Symposium}} on {{Cloud
  Computing}} - {{SOCC}} '11}, pages 1--14, {Cascais, Portugal}, 2011. {ACM
  Press}.
\newblock \href {https://doi.org/10.1145/2038916.2038932}
  {\path{doi:10.1145/2038916.2038932}}.

\bibitem{chandyDistributedSnapshotsDetermining1985}
K.~Mani Chandy and Leslie Lamport.
\newblock Distributed snapshots: Determining global states of distributed
  systems.
\newblock {\em ACM Transactions on Computer Systems}, 3(1):63--75, February
  1985.
\newblock \href {https://doi.org/10.1145/214451.214456}
  {\path{doi:10.1145/214451.214456}}.

\bibitem{clebschFullyConcurrentGarbage2013}
Sylvan Clebsch and Sophia Drossopoulou.
\newblock Fully concurrent garbage collection of actors on many-core machines.
\newblock In {\em Proceedings of the 2013 {{ACM SIGPLAN}} International
  Conference on {{Object}} Oriented Programming Systems Languages \&
  Applications - {{OOPSLA}} '13}, pages 553--570, {Indianapolis, Indiana, USA},
  2013. {ACM Press}.
\newblock \href {https://doi.org/10.1145/2509136.2509557}
  {\path{doi:10.1145/2509136.2509557}}.

\bibitem{fidge1987timestamps}
Colin~J Fidge.
\newblock Timestamps in message-passing systems that preserve the partial
  ordering.
\newblock {\em Australian Computer Science Communications}, 10(1):56--66,
  February 1988.

\bibitem{Baker-Hewitt-laws77}
Carl Hewitt and Henry~G. Baker.
\newblock Laws for communicating parallel processes.
\newblock In Bruce Gilchrist, editor, {\em Information Processing, Proceedings
  of the 7th {{IFIP}} Congress 1977, Toronto, Canada, August 8-12, 1977}, pages
  987--992. {North-Holland}, 1977.

\bibitem{kafuraConcurrentDistributedGarbage1995}
D.~Kafura, M.~Mukherji, and D.M. Washabaugh.
\newblock Concurrent and distributed garbage collection of active objects.
\newblock {\em IEEE Transactions on Parallel and Distributed Systems},
  6(4):337--350, April 1995.
\newblock \href {https://doi.org/10.1109/71.372788}
  {\path{doi:10.1109/71.372788}}.

\bibitem{laiTerminationDetectionDynamically1986}
Ten-Hwang Lai.
\newblock Termination detection for dynamically distributed systems with
  non-first-in-first-out communication.
\newblock {\em Journal of Parallel and Distributed Computing}, 3(4):577--599,
  December 1986.
\newblock \href {https://doi.org/10.1016/0743-7315(86)90015-8}
  {\path{doi:10.1016/0743-7315(86)90015-8}}.

\bibitem{DBLP:journals/cacm/LiebermanH83}
Henry Lieberman and Carl Hewitt.
\newblock A real-time garbage collector based on the lifetimes of objects.
\newblock {\em Commun. {ACM}}, 26(6):419--429, 1983.
\newblock \href {https://doi.org/10.1145/358141.358147}
  {\path{doi:10.1145/358141.358147}}.

\bibitem{matochaTaxonomyDistributedTermination1998}
Jeff Matocha and Tracy Camp.
\newblock A taxonomy of distributed termination detection algorithms.
\newblock {\em Journal of Systems and Software}, 43(3):207--221, November 1998.
\newblock \href {https://doi.org/10.1016/S0164-1212(98)10034-1}
  {\path{doi:10.1016/S0164-1212(98)10034-1}}.

\bibitem{matternAlgorithmsDistributedTermination1987}
Friedemann Mattern.
\newblock Algorithms for distributed termination detection.
\newblock {\em Distributed Computing}, 2(3):161--175, September 1987.
\newblock \href {https://doi.org/10.1007/BF01782776}
  {\path{doi:10.1007/BF01782776}}.

\bibitem{NHSDeployRiak2013}
{{NHS}} to {{Deploy Riak}} for {{New IT Backbone With Quality}} of {{Care
  Improvements}} in {{Sight}}.
\newblock
  https://riak.com/nhs-to-deploy-riak-for-new-it-backbone-with-quality-of-care-improvements-in-sight.html,
  October 2013.

\bibitem{PayPalBlowsBillion}
{{PayPal Blows Past}} 1 {{Billion Transactions Per Day Using Just}} 8 {{VMs
  With Akka}}, {{Scala}}, {{Kafka}} and {{Akka Streams}}.
\newblock
  https://www.lightbend.com/case-studies/paypal-blows-past-1-billion-transactions-per-day-using-just-8-vms-and-akka-scala-kafka-and-akka-streams.

\bibitem{piquerIndirectReferenceCounting1991}
Jos{\'e}~M. Piquer.
\newblock Indirect {{Reference Counting}}: {{A Distributed Garbage Collection
  Algorithm}}.
\newblock In Emile H.~L. Aarts, Jan {van Leeuwen}, and Martin Rem, editors,
  {\em Parle '91 {{Parallel Architectures}} and {{Languages Europe}}}, volume
  505, pages 150--165. {Springer Berlin Heidelberg}, {Berlin, Heidelberg},
  1991.
\newblock \href {https://doi.org/10.1007/978-3-662-25209-3_11}
  {\path{doi:10.1007/978-3-662-25209-3_11}}.

\bibitem{DBLP:conf/iwmm/PlainfosseS95}
David Plainfoss{\'e} and Marc Shapiro.
\newblock A survey of distributed garbage collection techniques.
\newblock In {\em Memory Management, International Workshop {{IWMM}} 95,
  Kinross, {{UK}}, September 27-29, 1995, Proceedings}, pages 211--249, 1995.
\newblock \href {https://doi.org/10.1007/3-540-60368-9\\_26}
  {\path{doi:10.1007/3-540-60368-9\\_26}}.

\bibitem{plyukhinConcurrentGarbageCollection2018}
Dan Plyukhin and Gul Agha.
\newblock Concurrent garbage collection in the actor model.
\newblock In {\em Proceedings of the 8th {{ACM SIGPLAN International Workshop}}
  on {{Programming Based}} on {{Actors}}, {{Agents}}, and {{Decentralized
  Control}} - {{AGERE}} 2018}, pages 44--53, {Boston, MA, USA}, 2018. {ACM
  Press}.
\newblock \href {https://doi.org/10.1145/3281366.3281368}
  {\path{doi:10.1145/3281366.3281368}}.

\bibitem{schelvisIncrementalDistributionTimestamp1989}
M.~Schelvis.
\newblock Incremental distribution of timestamp packets: A new approach to
  distributed garbage collection.
\newblock In {\em Conference Proceedings on {{Object}}-Oriented Programming
  Systems, Languages and Applications - {{OOPSLA}} '89}, pages 37--48, {New
  Orleans, Louisiana, United States}, 1989. {ACM Press}.
\newblock \href {https://doi.org/10.1145/74877.74883}
  {\path{doi:10.1145/74877.74883}}.

\bibitem{vardhanUsingPassiveObject2003}
Abhay Vardhan and Gul Agha.
\newblock Using passive object garbage collection algorithms for garbage
  collection of active objects.
\newblock {\em ACM SIGPLAN Notices}, 38(2 supplement):106, February 2003.
\newblock \href {https://doi.org/10.1145/773039.512443}
  {\path{doi:10.1145/773039.512443}}.

\bibitem{venkatasubramanianScalableDistributedGarbage1992}
Nalini Venkatasubramanian, Gul Agha, and Carolyn Talcott.
\newblock Scalable distributed garbage collection for systems of active
  objects.
\newblock In Yves Bekkers and Jacques Cohen, editors, {\em Memory
  {{Management}}}, volume 637, pages 134--147. {Springer-Verlag},
  {Berlin/Heidelberg}, 1992.
\newblock \href {https://doi.org/10.1007/BFb0017187}
  {\path{doi:10.1007/BFb0017187}}.

\bibitem{venkatasubramanianReasoningMetaLevel1995}
Nalini Venkatasubramanian and Carolyn Talcott.
\newblock Reasoning about meta level activities in open distributed systems.
\newblock In {\em Proceedings of the Fourteenth Annual {{ACM}} Symposium on
  {{Principles}} of Distributed Computing - {{PODC}} '95}, pages 144--152,
  {Ottowa, Ontario, Canada}, 1995. {ACM Press}.
\newblock \href {https://doi.org/10.1145/224964.224981}
  {\path{doi:10.1145/224964.224981}}.

\bibitem{vishnevskiyHowDiscordScaled2017}
Stanislav Vishnevskiy.
\newblock How {{Discord Scaled Elixir}} to 5,000,000 {{Concurrent Users}}.
\newblock https://blog.discord.com/scaling-elixir-f9b8e1e7c29b, July 2017.

\bibitem{wangConservativeSnapshotbasedActor2011}
Wei-Jen Wang.
\newblock Conservative snapshot-based actor garbage collection for distributed
  mobile actor systems.
\newblock {\em Telecommunication Systems}, June 2011.
\newblock \href {https://doi.org/10.1007/s11235-011-9509-1}
  {\path{doi:10.1007/s11235-011-9509-1}}.

\bibitem{wangActorGarbageCollection2010}
Wei-Jen Wang, Carlos Varela, Fu-Hau Hsu, and Cheng-Hsien Tang.
\newblock Actor {{Garbage Collection Using Vertex}}-{{Preserving
  Actor}}-to-{{Object Graph Transformations}}.
\newblock In David Hutchison, Takeo Kanade, Josef Kittler, Jon~M. Kleinberg,
  Friedemann Mattern, John~C. Mitchell, Moni Naor, Oscar Nierstrasz,
  C.~Pandu~Rangan, Bernhard Steffen, Madhu Sudan, Demetri Terzopoulos, Doug
  Tygar, Moshe~Y. Vardi, Gerhard Weikum, Paolo Bellavista, Ruay-Shiung Chang,
  Han-Chieh Chao, Shin-Feng Lin, and Peter M.~A. Sloot, editors, {\em Advances
  in {{Grid}} and {{Pervasive Computing}}}, volume 6104, pages 244--255.
  {Springer Berlin Heidelberg}, {Berlin, Heidelberg}, 2010.
\newblock \href {https://doi.org/10.1007/978-3-642-13067-0_28}
  {\path{doi:10.1007/978-3-642-13067-0_28}}.

\bibitem{wangDistributedGarbageCollection2006}
Wei-Jen Wang and Carlos~A. Varela.
\newblock Distributed {{Garbage Collection}} for {{Mobile Actor Systems}}:
  {{The Pseudo Root Approach}}.
\newblock In Yeh-Ching Chung and Jos{\'e}~E. Moreira, editors, {\em Advances in
  {{Grid}} and {{Pervasive Computing}}}, volume 3947, pages 360--372. {Springer
  Berlin Heidelberg}, {Berlin, Heidelberg}, 2006.
\newblock \href {https://doi.org/10.1007/11745693_36}
  {\path{doi:10.1007/11745693_36}}.

\bibitem{watsonEfficientGarbageCollection1987}
Paul Watson and Ian Watson.
\newblock An efficient garbage collection scheme for parallel computer
  architectures.
\newblock In G.~Goos, J.~Hartmanis, D.~Barstow, W.~Brauer, P.~Brinch~Hansen,
  D.~Gries, D.~Luckham, C.~Moler, A.~Pnueli, G.~Seegm{\"u}ller, J.~Stoer,
  N.~Wirth, J.~W. Bakker, A.~J. Nijman, and P.~C. Treleaven, editors, {\em
  {{PARLE Parallel Architectures}} and {{Languages Europe}}}, volume 259, pages
  432--443. {Springer Berlin Heidelberg}, {Berlin, Heidelberg}, 1987.
\newblock \href {https://doi.org/10.1007/3-540-17945-3_25}
  {\path{doi:10.1007/3-540-17945-3_25}}.

\end{thebibliography}

\appendix

\section{Appendix}

\subsection{Basic Properties}

\begin{lemma}\label{lem:release-is-final}
If $B$ has undelivered messages along $\Refob x A B$, then $x$ is an unreleased refob.
\end{lemma}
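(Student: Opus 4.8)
The plan is to prove the contrapositive: if $x$ has been released, then $B$ has no undelivered messages along $x$. Recall that $x$ being released means the $\ReleaseMsg(x,n)$ for $x$ was delivered to $B$ by a \textsc{Release} event at some time $t_r$, whose precondition forces $\alpha(B) \vdash \RecvCount(x,n)$ at $t_r$. That message was produced by a \textsc{SendRelease} event at some earlier time $t_s < t_r$, at which point $A$ held $\SentCount(x,n)$ (so $n$ is precisely the send count at deactivation), the precondition guaranteed that no fact $\CreatedUsing(x,\cdot)$ remained, and the transition removed $\Activated(\Refob x A B)$. Since $A$ is the unique owner of $x$ (tokens are globally unique), it is the only actor that ever sends along $x$; and since \textsc{SendRelease} consumes $\Activated(x)$, which—as argued next—cannot be regained, the release message for $x$ is unique and $n,t_s$ are well defined.

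First I would establish the crucial invariant that $A$ sends no message along $x$ after $t_s$. The only rules that emit a message along $x$ while incrementing $\SentCount(x)$ are \textsc{Send} (an $\AppMsg$, requiring $\Activated(x)$) and \textsc{SendInfo} (an $\InfoMsg$, requiring some $\CreatedUsing(x,\cdot)$). At $t_s$ both enabling facts are absent, so it suffices to show neither can ever reappear. This is the main obstacle, and it rests on token uniqueness: the refob $\Refob x A B$ is created with a globally fresh token, hence exactly once, and the message that would deliver it to its owner $A$ (via \textsc{Receive}) is created only at that moment and consumed upon delivery. Thus after $A$ deactivates $x$ there is no pending message that could reintroduce $\Activated(x)$, and \textsc{Spawn} and the initial configuration only ever install $\Activated$ facts for fresh tokens. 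Because adding a $\CreatedUsing(x,\cdot)$ fact (via \textsc{Send}) itself requires $\Activated(x)$, no new such fact can appear either. Hence after $t_s$ neither enabling fact is ever present again, and no further message along $x$ is sent.

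Finally I would close with a counting argument. Every message along $x$ has target $B$ and, when delivered, increments $\RecvCount(x)$ exactly once (via \textsc{Receive} for $\AppMsg$ or \textsc{Info} for $\InfoMsg$), while each send increments $\SentCount(x)$ exactly once; so at every time $t$ the number of undelivered messages along $x$ equals $S(t)-R(t)$, where $S(t)$ and $R(t)$ are the cumulative send and receive totals, with $R(t)\le S(t)$ always. By the invariant above, $S(t_r)=S(t_s)=n$, and the \textsc{Release} precondition gives $R(t_r)=n$, so there are zero undelivered messages along $x$ at $t_r$. Since $S$ is constant at $n$ for all $t\ge t_s$ while $R$ is nondecreasing and bounded by $S$, we get $R(t)=n=S(t)$ for all $t\ge t_r$, so no undelivered message along $x$ ever exists once $x$ is released. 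This proves the contrapositive and hence the lemma.
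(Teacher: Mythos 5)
Your proof is correct and takes essentially the same route as the paper's: the paper's two-sentence argument observes that all three message types ($\AppMsg$, $\InfoMsg$, $\ReleaseMsg$) can only be sent while $x$ is active and that the \textsc{Release} rule's count precondition forces their delivery before $x$ is released, which is exactly your contrapositive. Your version merely makes explicit the details the paper leaves implicit (token freshness preventing $\Activated(x)$ or $\CreatedUsing(x,\cdot)$ from reappearing after deactivation, and the send/receive count bookkeeping), all of which check out against the transition rules.
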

\begin{proof}
    There are three types of messages: $\AppMsg, \InfoMsg,$ and $\ReleaseMsg$. All three messages can only be sent when $x$ is active. Moreover, the \textsc{Release} rule ensures that they must all be delivered before $x$ can be released.
\end{proof}

\begin{lemma}\label{lem:facts-remain-until-cancelled}
$\ $
\begin{itemize}
    \item Once $\CreatedUsing(\Refob y A C, \Refob z B C)$ is added to $A$'s knowledge set, it will not be removed until after $A$ has sent an $\InfoMsg$ message containing $z$ to $C$.

    \item Once $\Created(\Refob z B C)$ is added to $C$'s knowledge set, it will not be removed until after $C$ has received the (unique) $\ReleaseMsg$ message along $z$.

    \item Once $\Released(\Refob z B C)$ is added to $C$'s knowledge set, it will not be removed until after $C$ has received the (unique) $\InfoMsg$ message containing $z$.
\end{itemize}
\end{lemma}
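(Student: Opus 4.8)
The plan is to prove all three statements by a single mechanism: for each of the facts $\CreatedUsing(y,z)$, $\Created(z)$, and $\Released(z)$, exactly one transition rule can delete it from an actor's knowledge set, so it suffices to locate that rule among those in Figures~\ref{rules:actors}, \ref{rules:release}, and~\ref{rules:composition} and read off its precondition. Formally this is an induction on the length of the execution, but its content is an exhaustive inspection of the rules. First I would record that the \textsc{In}, \textsc{Out}, \textsc{ReleaseOut}, \textsc{InfoOut}, \textsc{Idle}, and \textsc{Snapshot} rules leave every knowledge set unchanged; that \textsc{Spawn}, \textsc{Send}, \textsc{Receive}, \textsc{Info}, and \textsc{Release} only add facts (the $\IncSent$ and $\IncRecv$ operators merely rewrite count facts); and that \textsc{SendRelease} deletes only an $\Activated$ fact and a count. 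Hence the only rules capable of deleting a $\CreatedUsing$, $\Created$, or $\Released$ fact are \textsc{SendInfo} and \textsc{Compaction}.

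For the first statement, the unique rule that deletes $\CreatedUsing(\Refob y A C, \Refob z B C)$ is \textsc{SendInfo}$(y,z,A,B,C)$, whose right-hand side emits the message $\Msg{C}{\InfoMsg(y,z,B)}$. Since this message contains $z$, the deletion of the fact coincides exactly with $A$ sending the required $\InfoMsg$ to $C$. I would also observe here that $z$ is a globally fresh token introduced by a single \textsc{Send}, so $\CreatedUsing(\cdot, z)$ is added to $A$ only once; combined with the first statement, this shows that at most one $\InfoMsg$ containing $z$ is ever produced, justifying the word ``unique'' in the third statement.

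The second and third statements are dual and both pass through \textsc{Compaction}, the unique deleter of $\Created(z)$ and $\Released(z)$, which requires \emph{both} facts to be present simultaneously. Thus whenever $\Created(z)$ is deleted, $\Released(z)$ must have been present; but $\Released(z)$ is introduced only by \textsc{Release}$(z,B,C)$, which consumes a delivered $\ReleaseMsg(z,n)$, giving the second statement. Its uniqueness follows because \textsc{SendRelease} requires and deletes $\Activated(z)$, which is added to the owner at most once, so at most one $\ReleaseMsg$ along $z$ exists. Symmetrically, whenever $\Released(z)$ is deleted, $\Created(z)$ must have been present; for a refob created by \textsc{Send}, $\Created(z)$ is added to $C$ only by the \textsc{Info} rule delivering the $\InfoMsg$ containing $z$, which yields the third statement.

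The step I expect to be the main obstacle is precisely this last point, where I must track \emph{how} $\Created(z)$ entered $C$'s state. It is added either by \textsc{Info} (for a refob created via \textsc{Send}, which carries an $\InfoMsg$) or directly by \textsc{Spawn} (for the parent and self refobs, which carry none); the third statement is intended for the former case, consistent with the hypothesis of the first statement that $z$ has an associated $\CreatedUsing$ fact. I would therefore make explicit that the argument relies on $\Created(z)$ arising from \textsc{Info}, and leave the spawn and self refobs --- which form the base of a chain rather than an $\InfoMsg$-borne link --- to be handled where the Chain Lemma needs them. The remaining work is routine bookkeeping: confirming via token freshness that each relevant fact is added only once, so that the ``unique'' messages are well defined.
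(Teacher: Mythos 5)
Your proposal is correct and takes the same approach as the paper, whose entire proof is ``Immediate from the transition rules'': your rule-by-rule inspection (identifying \textsc{SendInfo} as the unique deleter of $\CreatedUsing$ facts and \textsc{Compaction} as the unique deleter of $\Created$ and $\Released$ facts, with \textsc{Compaction} requiring both to be present) is exactly that argument made explicit. Your caveat on the third bullet --- that $\Created(z)$ facts introduced by \textsc{Spawn} (the parent and self refobs) never have an associated $\InfoMsg$, so the statement must be read as applying to $\InfoMsg$-borne refobs --- is a genuine subtlety the paper's one-line proof glosses over, and you resolve it consistently with how the lemma is actually used (in the \textsc{Release} case of the Chain Lemma and in Lemma~\ref{lem:completeness-helper}, where the refob in question carries a $\CreatedUsing$ fact and hence was created by \textsc{Send}).
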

\begin{proof}
    Immediate from the transition rules.
\end{proof}

\begin{lemma}\label{lem:msg-counts}
    Consider a refob $\Refob x A B$.  Let $t_1, t_2$ be times such that $x$ has not yet been deactivated at $t_1$ and $x$ has not yet been released at $t_2$. In particular, $t_1$ and $t_2$ may be before the creation time of $x$.
    
    Suppose that $\alpha_{t_1}(A) \vdash \SentCount(x,n)$ and $\alpha_{t_2}(B) \vdash \RecvCount(x,m)$ and, if $t_1 < t_2$, that $A$ does not send any messages along $x$ during the interval $[t_1,t_2]$ . Then the difference $\max(n - m,0)$ is the number of messages sent along $x$ before $t_1$ that were not received before $t_2$.
\end{lemma}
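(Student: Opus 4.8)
The plan is to track, as a function of time, the two quantities $\SentCount(x,\cdot)$ and $\RecvCount(x,\cdot)$ and relate their values to the physical events (message sends and receives) along $x$. The core invariant I would establish is that at any time $t$ before $x$ is deactivated, the value $n$ with $\alpha_t(A) \vdash \SentCount(x,n)$ equals exactly the number of messages $A$ has sent along $x$ up to $t$; symmetrically, at any time $t$ before $x$ is released, the value $m$ with $\alpha_t(B) \vdash \RecvCount(x,m)$ equals exactly the number of messages $B$ has received along $x$ up to $t$. Both invariants follow by induction on the length of the execution, inspecting the transition rules: the send count for $x$ is incremented precisely by \textsc{Send} and \textsc{SendInfo} (the only rules that emit a message along $x$ and call $\IncSent$), and is never decremented while $x$ remains active (by the precondition of \textsc{SendRelease}, the count is only discarded \emph{at} deactivation); the receive count is incremented precisely by \textsc{Receive} and \textsc{Info} (the only rules delivering a message along $x$ and calling $\IncRecv$), and survives until release by the \textsc{Compaction} precondition. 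The derivation rules giving $\SentCount(x,0)$ and $\RecvCount(x,0)$ by default handle the case where $t_1$ or $t_2$ precedes the creation of $x$, so the invariant holds vacuously there with value $0$.

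With these two invariants in hand, the arithmetic is straightforward. Let $S_{\le t_1}$ be the number of messages sent along $x$ before $t_1$ and $R_{\le t_2}$ the number received before $t_2$; then $n = S_{\le t_1}$ and $m = R_{\le t_2}$. By \cref{lem:release-is-final}, every message along $x$ is sent while $x$ is active and delivered before $x$ is released, so the set of received messages is a subset of the sent messages, and a received message was necessarily sent earlier. I would then argue that the messages sent before $t_1$ but \emph{not} received before $t_2$ number exactly $\max(n-m,0)$. When $t_1 \ge t_2$, every message received before $t_2$ was sent before $t_1$, so $R_{\le t_2} \le S_{\le t_1}$, giving the count as $n - m \ge 0$. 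When $t_1 < t_2$, the hypothesis that $A$ sends nothing along $x$ during $[t_1,t_2]$ guarantees no message is sent after $t_1$ but counted in $R_{\le t_2}$ via a send inside the interval; hence again every one of the $m$ received messages was among the $n$ sent before $t_1$, so $R_{\le t_2} \le S_{\le t_1}$ and the uncollected count is $n - m$. In all cases the $\max(\cdot,0)$ is a safe upper-form, though the invariants actually force $n \ge m$.

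The main obstacle is the case $t_1 < t_2$, where without the no-send hypothesis a message sent \emph{after} $t_1$ could be received \emph{before} $t_2$, making $m$ count messages not included in $n$ and breaking the clean equality $R_{\le t_2} \le S_{\le t_1}$. The no-send assumption is exactly what rules this out, and the proof must invoke it precisely here: it ensures the $m$ messages counted at $t_2$ are drawn entirely from the $n$ messages counted at $t_1$, so their difference measures the in-flight messages. I would be careful to phrase the bijection between received messages and a subset of sent messages using the uniqueness of the token $x$ (and the fact that message counts are keyed on tokens, not actor names, as emphasized in \cref{sec:overview}), since this is what guarantees no message along a \emph{different} refob to $B$ is miscounted against $x$. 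The remaining cases reduce to bookkeeping over the transition rules and require no further insight.
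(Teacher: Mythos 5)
Your proposal is correct and follows essentially the same route as the paper's proof: both hinge on the observation that, under the stated hypotheses, the counts are never reset by \textsc{SendRelease} or \textsc{Compaction}, so $n$ and $m$ are exact totals of sends before $t_1$ and receipts before $t_2$, after which the difference counts the undelivered messages. Your version merely makes explicit what the paper leaves implicit (the rule-by-rule induction, the case split on $t_1$ versus $t_2$, and the role of the no-send hypothesis and token uniqueness), all of which is consistent with the paper's argument.
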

\begin{proof}
    Since $x$ is not deactivated at time $t_1$ and unreleased at time $t_2$, the message counts were never reset by the \textsc{SendRelease} or \textsc{Compaction} rules. Hence $n$ is the number of messages $A$ sent along $x$ before $t_1$ and $m$ is the number of messages $B$ received along $x$ before $t_2$. Hence $\max(n - m, 0)$ is the number of messages sent before $t_1$ and \emph{not} received before $t_2$.
\end{proof}

\subsection{Chain Lemma}

\ChainLemma*
\begin{proof}
    We prove that the invariant holds in the initial configuration and at all subsequent times by induction on events $\kappa \Step e \kappa'$, omitting events that do not affect chains. Let $\kappa = \Config{\alpha}{\mu}{\rho}{\chi}$ and $\kappa' = \Config{\alpha'}{\mu'}{\rho'}{\chi'}$.
    
    In the initial configuration, the only refob to an internal actor is $\Refob y A A$. Since $A$ knows $\Created(\Refob{y}{A}{A})$, the invariant is satisfied.
    
    In the cases below, let $x,y,z,A,B,C$ be free variables, not referencing the variables used in the statement of the lemma.
        
    \begin{itemize}
        \item $\textsc{Spawn}(x,A,B)$ creates a new unreleased refob $\Refob x A B$, which satisfies the invariant because $\alpha'(B) \vdash \Created(\Refob x A B)$.

        \item $\textsc{Send}(x,\vec y, \vec z, A,B,\vec C)$ creates a set of refobs $R$. Let $(\Refob z B C) \in R$, created using $\Refob y A C$.
        
        If $C$ is already in the root set, then the invariant is trivially preserved. Otherwise, there must be a chain $(\Refob{x_1}{A_1}{C}), \dots, (\Refob{x_n}{A_n}{C})$ where $x_n = y$ and $A_n = A$. Then $x_1,\dots,x_n,z$ is a chain in $\kappa'$, since $\alpha'(A_n) \vdash \CreatedUsing(x_n,z)$. 
        
        If $B$ is an internal actor, then this shows that every unreleased refob to $C$ has a chain in $\kappa'$. Otherwise, $C$ is in the root set in $\kappa'$. To see that the invariant still holds, notice that $\Refob z B C$ is a witness of the desired chain.
        
        \item $\textsc{SendInfo}(y,z,A,B,C)$ removes the $\CreatedUsing(y,z)$ fact but also sends $\InfoMsg(y,z,B)$, so chains are unaffected.
        
        \item $\textsc{Info}(y,z,B,C)$ delivers $\InfoMsg(y,z,B)$ to $C$ and adds $\Created(\Refob z B C)$ to its knowledge set.
        
        Suppose $\Refob z B C$ is part of a chain $(\Refob{x_1}{A_1}{C}), \dots, (\Refob{x_n}{A_n}{C})$, i.e. $x_i = y$ and $x_{i+1} = z$ and $A_{i+1} = B$ for some $i < n$. Since $\alpha'(C) \vdash \Created(\Refob{x_{i+1}}{A_{i+1}}{C})$, we still have a chain $x_{i+1},\dots,x_n$ in $\kappa'$.
        
        \item $\textsc{Release}(x,A,B)$ releases the refob $\Refob x A B$. Since external actors never release their refobs, both $A$ and $B$ must be internal actors.
        
        Suppose the released refob was part of a chain $(\Refob{x_1}{A_1}{B}), \dots, (\Refob{x_n}{A_n}{B})$, i.e. $x_i = x$ and $A_i = A$ for some $i < n$. We will show that $x_{i+1},\dots,x_n$ is a chain in $\kappa'$.
        
        Before performing $\textsc{SendRelease}(x_i,A_i,B)$, $A_i$ must have performed the $\textsc{Info}(x_i,x_{i+1},\allowbreak A_{i+1},B)$ event. Since the $\InfoMsg$ message was sent along $x_i$, Lemma~\ref{lem:release-is-final} ensures that the message must have been delivered before the present \textsc{Release} event. Furthermore, since $x_{i+1}$ is an unreleased refob in $\kappa'$, Lemma~\ref{lem:facts-remain-until-cancelled} ensures that $\alpha'(B) \vdash \Created(\Refob{x_{i+1}}{A_{i+1}}{B})$.
        
        \item $\textsc{In}(A,R)$ adds a message from an external actor to the internal actor $A$. This event can only create new refobs that point to receptionists, so it preserves the invariant.

        \item $\textsc{Out}(x,B,R)$ emits a message $\AppMsg(x,R)$ to the external actor $B$. Since all targets in $R$ are already in the root set, the invariant is preserved.
    \end{itemize}
\end{proof}

\subsection{Termination Detection}

Given a set of snapshots $Q$ taken before some time $t_f$, we write $Q_t$ to denote those snapshots in $Q$ that were taken before time $t < t_f$. If $\Phi_A \in Q$, we denote the time of $A$'s snapshot as $t_A$.

\Completeness*
Call this set of snapshots $Q$. First, we prove the following lemma.
\begin{lemma}\label{lem:completeness-helper}
    If $Q \vdash \Unreleased(\Refob x A B)$ and $B \in Q$, then $x$ is unreleased at $t_B$. 
\end{lemma}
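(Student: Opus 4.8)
The plan is to unpack the derivation $Q \vdash \Unreleased(\Refob x A B)$ into its possible witnesses and then track the lifecycle of the relevant facts. By the definition of $Q \vdash$, there is a single snapshot $\Phi_C \in Q$ with $\Phi_C \vdash \Created(\Refob x A B) \land \lnot\Released(\Refob x A B)$. The starting observation is that the fact $\Created(\Refob x A B)$ can be derived by only two kinds of actor: the target $B$, which stores it explicitly (after a \textsc{Spawn} or an \textsc{Info} event), or the unique creator of $x$, which stores some $\CreatedUsing(y,x)$ and derives $\Created(x)$ from it. This splits the argument into two cases, according to whether $C = B$ or $C$ is the creator of $x$.

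In the first case ($C = B$), I would appeal directly to \cref{lem:facts-remain-until-cancelled}. Since $\Released(\Refob x A B)$ is added to $B$'s knowledge exactly when $B$ processes the \textsc{Release} event for $x$, and since the only rule that removes $\Created(\Refob x A B)$ is \textsc{Compaction} --- which requires, and simultaneously discards, $\Released(\Refob x A B)$ --- a state in which $B$ derives $\Created(x)$ but not $\Released(x)$ can only occur before $B$ has processed the $\ReleaseMsg$ for $x$. Reading this off at time $t_B$ yields that $x$ is unreleased at $t_B$, as required.

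The second case, where $C \ne B$ holds $\CreatedUsing(y, \Refob x A B)$, is where I expect the real work to be. Here I would use the hypotheses specific to $Q$: every snapshot is taken after the actor's final action, and $C$ is terminated, so $C$ performs no further \textsc{SendInfo} and the (unique) $\InfoMsg$ carrying $x$ is therefore never delivered to $B$. I would combine this with the Chain Lemma (\cref{lem:chain-lemma}) applied to $B$, which is terminated and hence not in the root set: the fact $\CreatedUsing(y,x)$ keeps $y$ unreleased, because it blocks the precondition of \textsc{SendRelease} on $y$, so $y$ is an unreleased refob to $B$ owned by $C$ and must admit a chain ending at $C$. The objective is to show that, under these constraints, $B$ cannot have processed a $\ReleaseMsg$ for $x$ before $t_B$, i.e.\ that $x$ is unreleased at $t_B$.

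The main obstacle is precisely this second case. The difficulty is that the $\InfoMsg$ announcing $x$ is sent along $y$ by the creator, whereas any $\ReleaseMsg$ for $x$ is sent along $x$ by the owner $A$; since these travel on different channels, their deliveries are not ordered by the per\-refob message\-count mechanism of \cref{lem:msg-counts} in isolation. Consequently the argument cannot rest on message counts alone and must instead exploit global structure --- the chain to $x$ together with the termination of all the actors in the set --- to exclude a delivered release. I would carry this out by contradiction: assuming $x$ is released at $t_B$, I would use \cref{lem:release-is-final} and the chain ending at $C$ to expose either a chain link that \textsc{Release} should have collapsed or an outstanding message incompatible with the termination (hence with the post\-final\-action snapshot) of $B$ or of the owner $A$.
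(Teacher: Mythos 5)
Your decomposition into two cases according to the witness for $Q \vdash \Created(\Refob x A B)$ matches the paper, your Case~1 is essentially the paper's, and your key observation in Case~2 --- that $C$ still holds $\CreatedUsing(y,x)$ at a post-final-action snapshot and, being terminated, never performs \textsc{SendInfo}, so the unique $\InfoMsg$ containing $x$ is never delivered to $B$ --- is exactly the paper's. But there is a genuine gap in how you try to finish Case~2, and it starts with your unpacking of the hypothesis. You read $Q \vdash \Unreleased(x)$ as: a \emph{single} snapshot $\Phi_C$ derives $\Created(x) \land \lnot\Released(x)$. Under that reading the lemma is false, so no argument can close your Case~2: let $C$ create $\Refob x A B$ using $y$ and defer the $\InfoMsg$; let $A$ receive $x$ and deactivate it, let $B$ process the resulting $\ReleaseMsg(x,n)$ as its final action, and let all three actors terminate and snapshot. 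Then $\Phi_C \vdash \CreatedUsing(y,x)$, so $\Phi_C \vdash \Created(x) \land \lnot\Released(x)$, yet $x$ \emph{is} released at $t_B$ --- and indeed $\Released(x) \in \Phi_B$. The reading the paper uses (and needs) is global: $Q \not\vdash \Released(x)$, i.e.\ \emph{no} snapshot in $Q$ derives $\Released(x)$, which gives in particular $\Phi_B \not\vdash \Released(x)$. That fact is the linchpin, and your proposal never extracts it.

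With $\Phi_B \not\vdash \Released(x)$ in hand, Case~2 closes in one step, and none of the machinery you reach for --- the Chain Lemma, \cref{lem:release-is-final}, or message counts --- is needed. Suppose for contradiction that $x$ is released at $t_B$. Then the \textsc{Release} event added $\Released(x)$ to $B$'s knowledge set before $t_B$; by the third bullet of \cref{lem:facts-remain-until-cancelled}, that fact cannot be removed until $B$ receives the $\InfoMsg$ containing $x$, which (as you correctly argued) is never sent. Hence $\Released(x) \in \Phi_B$, contradicting $\Phi_B \not\vdash \Released(x)$. Your stated obstacle --- that the $\InfoMsg$ travels along $y$ while the $\ReleaseMsg$ travels along $x$, so their deliveries are unordered --- is a red herring: no ordering between the two channels is needed, only the persistence of the $\Released(x)$ fact until the never-arriving $\InfoMsg$. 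The contradiction you sketch instead (``a chain link that \textsc{Release} should have collapsed or an outstanding message incompatible with termination'') is not carried out, and the counterexample above shows it cannot succeed from your hypotheses alone.
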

\begin{proof}
    By definition, $Q \vdash \Unreleased(\Refob x A B)$ only if $Q \vdash \Created(x) \land \lnot \Released(x)$. Since $Q \not\vdash \Released(x)$, we must also have $\Phi_B \not\vdash \Released(x)$. For $Q \vdash \Created(x)$, there are two cases. 
        
    Case 1: $\Phi_B \vdash \Created(x)$. Since $\Phi_B \not\vdash \Released(x)$, \cref{lem:facts-remain-until-cancelled} implies that $x$ is unreleased at time $t_B$.
    
    Case 2: For some $C \in Q$ and some $y$, $\Phi_C \vdash \CreatedUsing(y,x)$. Since $C$ performed its final action before taking its snapshot, this implies that $C$ will never send the $\InfoMsg$ message containing $x$ to $B$. 
    
    Suppose then for a contradiction that $x$ is released at time $t_B$. Since $\Phi_B \not\vdash \Released(x)$, \cref{lem:facts-remain-until-cancelled} implies that $B$ received an $\InfoMsg$ message containing $x$ before its snapshot. But this is impossible because $C$ never sends this message.
\end{proof}

\begin{proof}[Proof (\cref{lem:terminated-is-complete})]
    By strong induction on time $t$, we show that $Q$ is closed and that every actor appears blocked.
    
    \textbf{Induction hypothesis:} For all times $t' < t$, if $B \in Q_{t'}$ and $Q \vdash \Unreleased(\Refob x A B)$, then $A \in Q$, $Q \vdash \Activated(x)$, and $Q \vdash \SentCount(x,n)$ and $Q \vdash \RecvCount(x,n)$ for some $n$.
    
    Since $Q_0 = \emptyset$, the induction hypothesis holds trivially in the initial configuration.
    
    Now assume the induction hypothesis. Suppose that $B \in Q$ takes its snapshot at time $t$ with $Q \vdash \Unreleased(\Refob x A B)$, which implies $Q \vdash \Created(x) \land \lnot\Released(x)$.
    
    $Q \vdash \Created(x)$ implies that $x$ was created before $t_f$. \cref{lem:completeness-helper} implies that $x$ is also unreleased at time $t_f$, since $B$ cannot perform a \textsc{Release} event after its final action. Hence $A$ is in the closure of $\{B\}$ at time $t_f$, so $A \in Q$.
    
    Now suppose $\Phi_A \not\vdash \Activated(x)$. Then either $x$ will be activated after $t_A$ or $x$ was deactivated before $t_A$. The former is impossible because $A$ would need to become unblocked to receive $x$. Since $x$ is unreleased at time $t_f$ and $t_A < t_f$, the latter implies that there is an undelivered $\ReleaseMsg$ message for $x$ at time $t_f$. But this is impossible as well, since $B$ is blocked at $t_f$.
    
    Finally, let $n$ such that $\Phi_B \vdash \RecvCount(x,n)$; we must show that $\Phi_A \vdash \SentCount(x,n)$. By the above arguments, $x$ is active at time $t_A$ and unreleased at time $t_B$. Since both actors performed their final action before their snapshots, all messages sent before $t_A$ must have been delivered before $t_B$. By Lemma~\ref{lem:msg-counts}, this implies $\Phi_A \vdash \SentCount(x,n)$.
\end{proof}

We now prove the safety theorem, which states that if $Q$ is a finalized set of snapshots, then the corresponding actors of $Q$ are terminated. We do this by showing that at each time $t$, all actors in $Q_t$ are blocked and all of their potential inverse acquaintances are in $Q$. 

Consider the first actor $B$ in $Q$ to take a snapshot. We show, using the Chain Lemma, that the closure of this actor is in $Q$. Then, since all potential inverse acquaintances of $B$ take snapshots strictly after $t_B$, it is impossible for $B$ to have any undelivered messages without appearing unblocked.

For every subsequent actor $B$ to take a snapshot, we make a similar argument with an additional step: If $B$ has any potential inverse acquaintances in $Q_{t_B}$, then they could not have sent $B$ a message without first becoming unblocked.

\Safety*
\begin{proof}
Proof by induction on events. The induction hypothesis consists of two clauses that must both be satisfied at all times $t \le t_f$.
\begin{itemize}
    \item \textbf{IH 1:} If $B \in Q_t$ and $\Refob x A B$ is unreleased, then $Q \vdash \Unreleased(x)$.
    \item \textbf{IH 2:} The actors of $Q_t$ are all blocked.
\end{itemize}

\paragraph*{Initial configuration} Since $Q_0 = \emptyset$, the invariant trivially holds.

\paragraph*{$\textsc{Snapshot}(B, \Phi_B)$}

Suppose \(B \in Q\) takes a snapshot at time \(t\). We show that if $\Refob x A B$ is unreleased at time $t$, then $Q \vdash \Unreleased(x)$ and there are no undelivered messages along $x$ from $A$ to $B$. We do this with the help of two lemmas.

\begin{lemma}\label{lem:complete-ref}
    If $Q \vdash \Unreleased(\Refob x A B)$, then $x$ is unreleased at time $t$ and there are no undelivered messages along $x$ at time $t$. Moreover, if $t_A > t$, then there are no undelivered messages along $x$ throughout the interval $[t,t_A]$.
\end{lemma}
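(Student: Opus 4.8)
The plan is to extract the structural facts guaranteed by ``$Q$ is finalized,'' reduce both message-delivery claims to a single application of \cref{lem:msg-counts}, and lean on the induction hypotheses of the safety argument to discharge that lemma's side condition. Since $Q$ is finalized it is closed and every actor appears blocked, so from $Q \vdash \Unreleased(\Refob x A B)$ together with $B \in Q$ I first obtain $A \in Q$ and $\Phi_A \vdash \Activated(\Refob x A B)$ (by closedness), and $\Phi_A \vdash \SentCount(x,n)$ and $\Phi_B \vdash \RecvCount(x,n)$ for a common $n$ (because $B$ appears blocked). Writing $t_A$ and $t$ for the snapshot times of $A$ and $B$, these two snapshots are the only data the proof needs.

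First I would establish that $x$ is unreleased at time $t$. The key observation is that $\Activated(\Refob x A B)$ can only leave $A$'s state through \textsc{SendRelease}, and the unique token $x$ is never reactivated once released; hence $\Phi_A \vdash \Activated(x)$ witnesses that $A$ has not yet sent the $\ReleaseMsg$ for $x$ at time $t_A$, so $B$ cannot have received it before $t_A$. If $t_A \ge t$ this already gives the claim. If $t_A < t$, I would invoke \textbf{IH 2}: since $A \in Q_{t'}$ for every $t' \in (t_A,t)$ and $A$ is idle at its own snapshot, $A$ is idle --- and so performs no \textsc{SendRelease} --- throughout $[t_A,t)$; thus no $\ReleaseMsg$ for $x$ is sent before $t$, and $x$ is unreleased at $t$.

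With this in hand, both statements about undelivered messages reduce to \cref{lem:msg-counts} applied with $t_1 = t_A$ and $t_2 = t$. Its hypotheses ``$x$ not deactivated at $t_1$'' and ``$x$ not released at $t_2$'' follow from $\Phi_A \vdash \Activated(x)$ and from the previous paragraph, and its conclusion becomes $\max(n-n,0)=0$, i.e.\ every message sent along $x$ before $t_A$ is received before $t$. For the unconditional claim, when $t_A \ge t$ this is immediate, since any message sent before $t$ is in particular sent before $t_A$ and hence received before $t$; when $t_A < t$ the side condition of \cref{lem:msg-counts} demands that $A$ send nothing along $x$ during $[t_A,t]$, which is exactly the idleness of $A$ already furnished by \textbf{IH 2}, so every message along $x$ is sent before $t_A$ and received before $t$ and nothing is undelivered at $t$.

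Finally, for the ``moreover'' clause I would note that when $t_A > t$ the side condition of \cref{lem:msg-counts} is vacuous (as $t_1 = t_A \not< t_2 = t$), so the lemma applies unconditionally and yields the strong conclusion that \emph{every} message sent along $x$ before $t_A$ is received before $t$. This rules out any send during the interval: a message emitted along $x$ at some $t' \in (t,t_A)$ would be sent before $t_A$ yet received before $t \le t'$, which is impossible; and $A$ is idle at $t_A$ itself. Hence no message along $x$ is in transit at any point of $[t,t_A]$. The main obstacle is the case $t_A < t$ of the unconditional claim: there the matching counts $n=n$ alone cannot bound the messages $A$ might emit \emph{after} its snapshot, and the argument genuinely depends on feeding the safety induction hypothesis \textbf{IH 2} back in to conclude that $A$, having already snapshotted, stays idle and therefore silent on $x$ until $B$ takes its own snapshot.
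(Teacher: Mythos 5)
Your proof is correct and takes essentially the same approach as the paper's: you extract $\Phi_A \vdash \Activated(x)$ and the matching $\SentCount(x,n)$/$\RecvCount(x,n)$ facts from closedness and apparent blockedness, apply \cref{lem:msg-counts} with $t_1 = t_A$ and $t_2 = t$, and split on whether $t_A > t$ (where the side condition is vacuous and the ``sends precede receipts'' observation clears the whole interval $[t,t_A]$) or $t_A < t$ (where \textbf{IH 2} of the safety induction keeps $A$ blocked, hence silent on $x$, during $[t_A,t]$, which both keeps $x$ unreleased at $t$ and discharges the lemma's side condition). The only cosmetic difference is that you establish ``$x$ is unreleased at $t$'' as a separate preliminary step, whereas the paper folds it into the two cases; the content is identical.
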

\begin{proof}[Proof (Lemma)]
    Since $Q$ is closed, we have $A \in Q$ and $\Phi_A \vdash \Activated(x)$. Since $B$ appears blocked, we must have $\Phi_A \vdash \SentCount(x,n)$ and $\Phi_B \vdash \RecvCount(x,n)$ for some $n$.
    
    Suppose $t_A > t$. Since $\Phi_A \vdash \Activated(x)$, $x$ is not deactivated and not released at $t_A$ or $t$. Hence, by Lemma~\ref{lem:msg-counts}, every message sent along $x$ before $t_A$ was received before $t$. Since message sends precede receipts, each of those messages was sent before $t$. Hence there are no undelivered messages along $x$ throughout $[t,t_A]$.
    
    Now suppose $t_A < t$. Since $\Phi_A \vdash \Activated(x)$, $x$ is not deactivated and not released at $t_A$. By IH 2, $A$ was blocked throughout the interval $[t_A,t]$, so it could not have sent a $\ReleaseMsg$ message. Hence $x$ is not released at $t$. By Lemma~\ref{lem:msg-counts}, all messages sent along $x$ before $t_A$ must have been delivered before $t$. Hence, there are no undelivered messages along $x$ at time $t$.
\end{proof}

\begin{lemma}\label{lem:complete-chains}
    Let $\Refob{x_1}{A_1}{B}, \dots, \Refob{x_n}{A_n}{B}$ be a chain to $\Refob x A B$ at time $t$. Then $Q \vdash \Unreleased(x)$.
\end{lemma}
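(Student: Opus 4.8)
The plan is to prove, by induction on the index $j$ along the chain, the slightly repackaged statement $Q \vdash \Created(x_j)$ for every refob $\Refob{x_j}{A_j}{B}$ in the chain; the desired conclusion $Q \vdash \Unreleased(x)$ is then the case $j = n$. The key simplification I would exploit first is that the $\lnot\Released$ half of $\Unreleased$ is essentially free: every $x_j$ in the chain targets $B$ and is unreleased at the snapshot time $t = t_B$, and since only the target of a refob ever records a $\Released$ fact (via the \textsc{Release} rule) and $B$ has not yet received a $\ReleaseMsg$ for $x_j$, we have $\Phi_B \not\vdash \Released(x_j)$ and hence $Q \not\vdash \Released(x_j)$. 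Consequently $Q \vdash \Created(x_j)$ already yields $Q \vdash \Unreleased(x_j)$, which lets me upgrade each inductive hypothesis to $Q \vdash \Unreleased(x_i)$ so that \cref{lem:complete-ref} becomes applicable. The base case is immediate: the chain definition gives $\alpha_t(B) \vdash \Created(x_1)$, and since $B$'s snapshot $\Phi_B$ records exactly its knowledge at $t$, we get $Q \vdash \Created(x_1)$.

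For the inductive step I would assume $Q \vdash \Unreleased(x_i)$ and aim for $Q \vdash \Created(x_{i+1})$. The chain definition offers two possibilities for the link from $x_i$ to $x_{i+1}$: either $\alpha_t(A_i) \vdash \CreatedUsing(x_i,x_{i+1})$, or the message $\InfoMsg(x_i,x_{i+1})$ is in transit to $B$. The in-transit possibility is ruled out at once, because such a message travels along $x_i$, whereas \cref{lem:complete-ref}---whose hypothesis $Q \vdash \Unreleased(x_i)$ we have just secured---guarantees there are no undelivered messages along $x_i$ at time $t$. So the real content reduces to the $\CreatedUsing$ case, where the task is to show that this fact survives into $A_i$'s snapshot, i.e.\ $\Phi_{A_i} \vdash \CreatedUsing(x_i,x_{i+1})$, which by the derivation rule for $\CreatedUsing$ gives $Q \vdash \Created(x_{i+1})$.

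I expect this survival argument to be the main obstacle, since $A_i$'s snapshot may be taken either before or after $B$'s, and the two orderings need different tools. Closedness of $Q$ (from $Q \vdash \Unreleased(x_i)$ and $B \in Q$) first supplies $A_i \in Q$ and $\Phi_{A_i} \vdash \Activated(x_i)$, so $A_i$ does have a snapshot; only its contents are in question. If $t_{A_i} > t$, I would combine the interval clause of \cref{lem:complete-ref} (no undelivered messages along $x_i$ throughout $[t,t_{A_i}]$) with \cref{lem:facts-remain-until-cancelled}: removing $\CreatedUsing(x_i,x_{i+1})$ requires sending the corresponding $\InfoMsg$ along $x_i$, which would momentarily place an undelivered message along $x_i$ in that interval---a contradiction---so the fact persists to $t_{A_i}$. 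If $t_{A_i} < t$, then $A_i \in Q_s$ for every $s \in (t_{A_i},t)$, so \textbf{IH 2} forces $A_i$ to be blocked, hence idle, throughout that interval; since creating $x_{i+1}$ would require $A_i$ to be busy, the refob $x_{i+1}$ and its fact $\CreatedUsing(x_i,x_{i+1})$ must already have been present at $t_{A_i}$ (the degenerate case $A_i = B$, where $t_{A_i} = t$, is trivial because $\alpha_t(A_i) = \Phi_B$). Iterating from $j = 1$ to $j = n$ gives $Q \vdash \Created(x_n) = \Created(x)$, and together with $Q \not\vdash \Released(x)$ this yields $Q \vdash \Unreleased(x)$.
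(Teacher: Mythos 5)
Your proposal is correct and takes essentially the same route as the paper's proof: the same reduction to showing $Q \vdash \Created(x_j)$ by induction with the $\lnot\Released$ half obtained for free from $B$'s snapshot, the same use of \cref{lem:complete-ref} to rule out the in-transit $\InfoMsg$ case and to block \textsc{SendInfo} when $t_{A_i} > t$, and the same appeal to IH~2 (blockedness) when $t_{A_i} < t$. Your explicit handling of the degenerate case $A_i = B$ and your spelled-out justification that removing $\CreatedUsing(x_i,x_{i+1})$ would create an undelivered message along $x_i$ are minor elaborations of steps the paper leaves implicit.
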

\begin{proof}[Proof (Lemma)]
    Since all refobs in a chain are unreleased, we know $\forall i \le n,\ \Phi_B \not\vdash \Released(x_i)$ and so $Q \not\vdash \Released(x_i)$. It therefore suffices to prove, by induction on the length of the chain, that $\forall i \le n,\ Q \vdash \Created(x_i)$.
    
    \textbf{Base case:} By the definition of a chain, $\alpha_t(B) \vdash \Created(x_1)$, so $\Created(x_1) \in \Phi_B$.
    
    \textbf{Induction step:} Assume $Q \vdash \Unreleased(x_i)$, which implies $A_i \in Q$. Let $t_i$ be the time of $A_i$'s snapshot.
    
    By the definition of a chain, either the message $\Msg{B}{\InfoMsg(x_i,x_{i+1})}$ is in transit at time $t$, or $\alpha_t(A_i) \vdash \CreatedUsing(x_i,x_{i+1})$. But the first case is impossible by Lemma~\ref{lem:complete-ref}, so we only need to consider the latter.
    
    Suppose $t_i > t$. Lemma~\ref{lem:complete-ref} implies that $A_i$ cannot perform the $\textsc{SendInfo}(x_i,x_{i+1},A_{i+1},B)$ event during $[t,t_i]$. Hence $\alpha_{t_i}(A_i) \vdash \CreatedUsing(x_i,x_{i+1})$, so $Q \vdash \Created(x_{i+1})$.
    
    Now suppose $t_i < t$. By IH 2, $A_i$ must have been blocked throughout the interval $[t_i,t]$. Hence $A_i$ could not have created any refobs during this interval, so $x_{i+1}$ must have been created before $t_i$. This implies $\alpha_{t_i}(A_i) \vdash \CreatedUsing(x_i,x_{i+1})$ and therefore $Q \vdash \Created(x_{i+1})$.
\end{proof}

Lemma~\ref{lem:complete-chains} implies that $B$ cannot be in the root set. If it were, then by the Chain Lemma there would be a refob $\Refob y C B$ with a chain where $C$ is an external actor. Since $Q \vdash \Unreleased(y)$, there would need to be a snapshot from $C$ in $Q$ -- but external actors do not take snapshots, so this is impossible.

Since $B$ is not in the root set, there must be a chain to every unreleased refob $\Refob x A B$. By Lemma~\ref{lem:complete-chains}, $Q \vdash \Unreleased(x)$. By Lemma~\ref{lem:complete-ref}, there are no undelivered messages to $B$ along $x$ at time $t$. Since $B$ can only have undelivered messages along unreleased refobs (Lemma~\ref{lem:release-is-final}), the actor is indeed blocked.

\paragraph*{$\textsc{Send}(x,\vec y, \vec z, A,B,\vec C)$}

In order to maintain IH 2, we must show that if $B \in Q_t$ then this event cannot occur. So suppose $B \in Q_t$. By IH 1, we must have $Q \vdash \Unreleased(\Refob x A B)$, so $A \in Q$. By IH 2, we moreover have $A \not\in Q_t$ -- otherwise $A$ would be blocked and unable to send this message. Since $B$ appears blocked in $Q$, we must have $\Phi_A \vdash \SentCount(x,n)$ and $\Phi_B \vdash \RecvCount(x,n)$ for some $n$. Since $x$ is not deactivated at $t_A$ and unreleased at $t_B$, \cref{lem:msg-counts} implies that every message sent before $t_A$ is received before $t_B$. Hence $A$ cannot send this message to $B$ because $t_A > t > t_B$.

In order to maintain IH 1, suppose that one of the refobs sent to $B$ in this step is $\Refob z B C$, where $C \in Q_t$. Then in the next configuration, $\CreatedUsing(y,z)$ occurs in $A$'s knowledge set. By the same argument as above, $A \in Q \setminus Q_t$ and $\Phi_A \vdash \SentCount(y,n)$ and $\Phi_C \vdash \RecvCount(y,n)$ for some $n$. Hence $A$ cannot perform the $\textsc{SendInfo}(y,z,A,B,C)$ event before $t_A$, so $\Phi_A \vdash \CreatedUsing(y,z)$ and $Q \vdash \Created(z)$.

\paragraph*{SendInfo(y,z,A,B,C)}

By the same argument as above, $A \not\in Q_t$ cannot send an $\InfoMsg$ message to $B \in Q_t$ without violating message counts, so IH 2 is preserved.

\paragraph*{$\textsc{SendRelease}(x,A,B)$}

Suppose that $A \not\in Q_t$ and $B \in Q_t$. By IH 1, $\Refob x A B$ is unreleased at time $t$. Since $Q$ is finalized, $\Phi_A \vdash \Activated(x)$. Hence $A$ cannot deactivate $x$ and IH 2 is preserved.

\paragraph*{$\textsc{In}(A,R)$}

Since every potential inverse acquaintance of an actor in $Q_t$ is also in $Q$, none of the actors in $Q_t$ is a receptionist. Hence this rule does not affect the invariants.

\paragraph*{$\textsc{Out}(x,B,R)$}

Suppose $(\Refob y B C) \in R$ where $C \in Q_t$. Then $y$ is unreleased and $Q \vdash \Unreleased(y)$ and $B \in Q$. But this is impossible because external actors do not take snapshots.

\end{proof}

\end{document}